\documentclass{llncs}

\usepackage[table]{xcolor} 

\usepackage[utf8]{inputenc}
\usepackage{amsmath}
\usepackage{amssymb}
\usepackage{adjustbox}
\usepackage{pdfpages}
\usepackage{caption}
\usepackage[ruled,vlined]{algorithm2e}
\usepackage{upgreek}
\usepackage{latexsym }
\usepackage{ stmaryrd }
\newcommand{\comment}[1]{}
\usepackage{centernot,cancel}
\usepackage[toc,page]{appendix}
\usepackage{tabto}
\usepackage{comment}
\usepackage{multirow}
\usepackage{hyperref}
\usepackage[ligature,reserved,inference]{semantic}
\usepackage{paralist}

\usepackage{wrapfig} 
\usepackage{float}

\usepackage{mathtools}

\usepackage{subcaption}

\usepackage{tikz}
\usetikzlibrary{decorations.pathreplacing,calc}

%



\mainmatter
\pagestyle{headings}

\title{Predictable and Performant Reactive Synthesis Modulo Theories via Functional Synthesis}


\author{Andoni Rodríguez \inst{1,2}, Felipe Gorostiaga\inst{1,3} and César Sánchez \inst{1}}
\institute{
    IMDEA Software Institute, Madrid. Spain
    \and
    Universidad Politécnica de Madrid. Spain
    \and
    CIFASIS. Argentina
}


\usepackage{graphicx}
\usepackage{amssymb} 
\usepackage{ltlfonts}
\usepackage{listings}

\newcommand{\tupleof}[1]{\langle#1\rangle}
\newcommand{\seqof}[1]{(#1)}
\newcommand{\record}[1]{\tupleof{#1}}
\newcommand{\True}{\ensuremath{\textit{true}}\xspace}
\newcommand{\False}{\ensuremath{\textit{false}}\xspace}
\newcommand{\KWD}[1]{\ensuremath{\mathit{#1}}\xspace}

\newcommand{\AP}{\KWD{AP}}

\newcommand{\DefOR}{\ensuremath{\hspace{0.2em}\big|\hspace{0.2em}}}

\newcommand{\Always}{\LTLsquare}
\newcommand{\Event}{\LTLdiamond} 
\newcommand{\Next}{\LTLcircle}

\newcommand{\U}{\mathbin{\mathcal{U}}}

\renewcommand{\And}{\mathrel{\wedge}}
\newcommand{\Or}{\mathrel{\vee}}
\newcommand{\Impl}{\mathrel{\rightarrow}}
\newcommand{\Into}{\Impl}

\newcommand{\ltl}{\textup{LTL}}
\newcommand{\LTL}{\ensuremath{\ltl}\xspace}
\newcommand{\LTLt}{\ensuremath{\ltl^{\calT}}\xspace}

\newcommand{\phiT}{\ensuremath{\varphi^{\mathcal{T}}}\xspace}
\newcommand{\phiB}{\ensuremath{\varphi^{\mathbb{B}}}\xspace}
\newcommand{\phiExtra}{\varphi^{\textit{extra}}}
\newcommand{\phiEx}{\ensuremath{\phiExtra}\xspace}
\newcommand{\xs}{\ensuremath{\overline{x}}\xspace}
\newcommand{\ys}{\ensuremath{\overline{y}}\xspace}
\newcommand{\zs}{\ensuremath{\overline{z}}\xspace}
\newcommand{\vxs}{\ensuremath{v_{\overline{x}}}\xspace}
\newcommand{\ves}{\ensuremath{v_{\overline{e}}}\xspace}
\newcommand{\Vss}{\ensuremath{v_{\overline{s}}}\xspace}
\newcommand{\vys}{\ensuremath{v_{\overline{y}}}\xspace}
\newcommand{\vzs}{\ensuremath{v_{\overline{z}}}\xspace}

\newcommand{\Us}{\ensuremath{\overline{u}}\xspace}

\newcommand{\Ss}{\ensuremath{\overline{s}}\xspace}
\newcommand{\Es}{\ensuremath{\overline{e}}\xspace}

\newcommand{\mycal}[1]{\ensuremath{\mathcal{#1}}\xspace}
\newcommand{\calC}{\mycal{C}}

\newcommand{\calR}{\mycal{R}}
\newcommand{\calT}{\mycal{T}}

\newcommand{\VR}{\ensuremath{\textit{VR}}\xspace}

\newcommand{\ThN}{\mathcal{T}_\mathbb{N}}
\newcommand{\ThZ}{\mathcal{T}_\mathbb{Z}}
\newcommand{\ThR}{\mathcal{T}_\mathbb{R}}

\newcommand{\Vars}{\mathit{Vars}}
\newcommand{\VarsE}{\mathit{Vars}_E}
\newcommand{\VarsS}{\mathit{Vars}_S}
\newcommand{\Bool}{\mathbb{B}}


\definecolor{darkGray}{gray}{0.55} 
\definecolor{lightGray}{gray}{0.85} 

\newcommand{\rhoT}{\ensuremath{\rho^{\mathcal{T}}}\xspace}
\newcommand{\rhoB}{\ensuremath{\rho^{\mathbb{B}}}\xspace}


\newcommand{\Theo}{\mathcal{T}}
\newcommand{\Boolbb}{\mathbb{B}}
\newcommand{\PhiB}{\varphi^{\Boolbb}}
\newcommand{\PhiT}{\varphi^{\Theo}}
\newcommand{\phiLegal}{\varphi^\textit{legal}}

\newcommand{\dom}{\mathbb{D}}
\newcommand{\val}{\KWD{val}}

\newcommand{\es}{\overline{e}}
\newcommand{\sss}{\overline{s}}

\newcommand{\extraCons}{\Gamma}

\reservestyle{\component}{\mathsf}
\component{partitioner,provider,controller}
\newcommand{\partitioner}{\ensuremath{\<partitioner>}\xspace}
\newcommand{\provider}{\ensuremath{\<provider>}\xspace}

\newcommand{\skolem}[1]{\ensuremath{\mathtt{#1}}\xspace}

\newcommand{\skh}{\skolem{h}}

\newcommand{\TableBenchmark}{

\begin{table*}[t!]
 \centering
\begin{tabular}{|c|c||c|c|c|c|c|c||c|c|c|c|c|c|}  \hline
  Bn. & Sz. & {\textit{Prep.}} & 
  \multicolumn{2}{c|}{$1K$} & \multicolumn{3}{c|}{$10K$} & 
  \multicolumn{3}{c|}{$10K$ ($\textit{m/m.})$} & \multicolumn{3}{c|}{$10K$ ($\textit{pc.}$)}\\ 
 \cline{4-14}
  (nm.) & (vr, lt) & (s.) & Dyn. & St. & Dyn. & 
  St. & Pre. & Dyn. & St. & Pre. & Dyn. & St. & Pre. \\ [0.5ex] 
  
 \hline 
 
{\textit{Li.}} &  (5, 16) & 33.73 &  240 & 5.74 & 232 & 4.54 & $\sim 4$
 & 216 & 4.01 & $\sim 4$ & 204 & 3.52 &  $\sim 7$ \\ 
 
 \hline 
 
  {\textit{Tr.}} &  (19, 36) & 9219.11 & 272 & 5.05  & 262 & 5.03 & $\sim 9$
   & 294 & 5.08 & $\sim 14$ & 270 &  4.92& $\sim 15$  \\ 
 
 \hline 
 
 \textit{Con.}& (2, 2) & 0.09 & 104 & 2.08  & 104 & 1.89 & $\sim 2$
  & 107 & 1.94 & $\sim 2$ &  129 & 2.18 &  $\sim 2$ \\
 
  \hline 
  
 \textit{Coo.}& (3, 5) & 2.60 & 171 & 2.94 & 168 & 2.84 & $\sim 3$
   & 168 & 3.32 &  $\sim 4$ & 173 & 2.81 & $\sim 3$  \\
 
 \hline 
 
{\textit{Usb}} & (5, 8) & 346.29 & 302 & 6.04 & 304 & 4.82 & $\sim 5$
& 329 & 5.46 & $\sim 7$ & 313 & 6.00 & $\sim 6$ \\ 

 \hline 
 
  {\textit{Sta.}} &  (11, 14) & 182.1 &  295 & 4.91 & 291 & 5.19 & $\sim 6$
   & 299 & 5.24 & $\sim 9$ & 298 & 4.73 & $\sim 11$  \\[0.5em]
 
 \hline
 

 

\end{tabular}

\caption{Results in \cite{rodriguez24adaptive} (see Dyn.) versus ours (see St.), 
where times are measured in $\mu$.
Recall from \cite{rodriguez23boolean} that if the literals in $\varphi$ are split into clusters that do not share
variables, the Boolean abstraction process can handle each cluster
independently and composed afterwards.
}
  \label{tabBenchmark}
\end{table*}

}

\begin{document}

\maketitle

\begin{abstract}
  Reactive synthesis is the process of generating correct controllers
  from temporal logic specifications.
  Classical \LTL reactive synthesis handles (propositional) \LTL as a
  specification language.
  Boolean abstractions allow reducing \LTLt specifications (i.e.,
  \LTL with propositions replaced by literals from a theory $\calT$),
  into equi-realizable \LTL specifications.
  In this paper we extend these results into a full \emph{static}
  synthesis procedure.
  The synthesized system receives from the environment valuations of
  variables from a rich theory $\calT$ and outputs valuations of
  system variables from $\calT$.
  We use the abstraction method to synthesize a reactive Boolean controller
  from the \LTL specification, and we combine it with functional
  synthesis to obtain a static controller for the original \LTLt
  specification.
  %
  %
  We also show that our method allows \emph{adaptive
    responses} in the sense that the controller can optimize its outputs
  in order to e.g., always provide the smallest safe values.
  This is the first full static synthesis method for $\LTLt$, which is
  a deterministic program (hence predictable and efficient).
  %
  %
\end{abstract}


\section{Introduction} \label{sec:intro}

Reactive synthesis for Linear Temporal Logic (\LTL)
specifications~\cite{pnueli77temporal} has received extensive research
attention~\cite{pnueli89onthesythesis}.
A specification $\varphi$ has its propositions split into those
variables controlled by the system and the rest, controlled by the
environment.
A specification is realizable if there is a strategy for the system
that produces valuations of the system variables such that all traces
generated by the controller satisfy the specification.
Realizability is the decision problem of whether such a strategy for
the system exists.
Synthesis is the process of generating one such winning strategy.
Also, both problems are decidable for \LTL~\cite{pnueli77temporal}.

A recent extension of \LTL called \LTLt (\LTL modulo theories) allows
replacing propositions with literals from a first-order theory \calT.
Given an \LTLt specification $\phiT$ an equi-realizable \LTL $\phiB$
formula can be generated, provided that the validity of \calT formulae
of the form $\exists^*\forall^*$ is
decidable~\cite{rodriguez23boolean,rodriguez24realizability}.
In \LTLt synthesis the theory variables (for example Natural o Real)
in the specification are split into environment-controlled and
system-controlled variables, and both kinds can appear in any given
literal, whereas in \LTL an atomic proposition belongs exclusively to
one player.

Note that a controller obtained from an off-the-shelf synthesis
procedure for the Booleanized LTL formula $\phiB$ cannot be directly
used as a controller for $\phiT$, because it must handle rich input
and output values.
Previous similar approaches either (1)
focus on the satisfiability problem and not in realizability 
(e.g., \cite{geatti22linear,geatti23decidable});
or (2) cannot be adapted to arbitrary $\calT$
\cite{katis2018validity} or (3) do not guarantee termination
\cite{maderbacher2021reactive,samuel23symbolic,heim24solving}, which makes these
solutions incomplete.
Recently, \cite{rodriguez24adaptive} presented a method for synthesis
of a fragment of decidable $\LTLt$ specifications, which relies on the
use of SMT solvers on-the-fly at every reaction, so it does not
produce a standalone controller.
This precludes the application to real embedded systems where
controllers frequently operate because (1) the SMT solver is not
guaranteed to terminate (particularly with limited resources), (2) the
solver may not return the same values provided the same formula
(affecting \emph{predictability}) and (3) invoking solvers on the fly
has an impact on the performance and the reaction time.

In this paper we present a \emph{static} synthesis procedure for \LTLt
specifications.
Our method proceeds as follows.
We first obtain a Boolean controller $C$ for the equi-realizable
Boolean specification $\phiB$.
The controller for $\phiT$ uses two additional components: a
\partitioner, that transforms the environment input $u$ into the
corresponding Boolean input to $C$ and a \provider that receives the
input $u$ and the reaction from $C$, and produces the reaction $v$.
Our \provider, instead of performing SMT calls, is implemented as a a
collection of Skolem functions $f$ that generate, given $u$, an output
$v$, such that the literals in $\phiT$ agree with the reaction chosen
by $C$.
%
%
Since each trace produced by our controller satisfies $\phiT$, the
composition of the \partitioner, the Boolean controller $C$ and the
\provider is a controller for $\phiT$
Therefore, the procedure described here is a static synthesis
procedure for specifications in $\LTLt$.
The resulting controller is standalone deterministic program, which is
predictable and highly performant.

Next, we exploit the fact that we can synthesize Skolem functions that
additionally receive a set of constraints to not only generate outputs
that satisfy the desired literals, but that also optimize certain
criteria from the set of possible correct outputs (e.g., to provide
the smallest value among possible values).
We call this technique \textit{adaptivity}.
%
%
%
%
%
All these results are applicable to $\LTLt$ on infinite or on finite
traces, using appropriate synthesis tools for the resulting $\phiB$.

In summary, the contributions of this paper are:
(1) a formalization and soundness proof of the controller architecture for synthesis for $\LTLt$;
(2) a derived correct method to synthetise static controllers from $\LTLt$ specifications combining Boolean abstraction, reactive synthesis and functional synthesis;
(3) a formalization of the limits and capabilities of using Skolem functions, 
showing their power to model adaptivity;
(4) an extensive empirical evaluation that shows our method predictable and fast.
To the best of our knowledge, this is the first full static reactive synthesis
approach for $\LTLt$ specifications.
Moreover, since our approach leverages off-the-shelf components
(reactive synthesis, functional synthesis), it would immediately
benefit from advances in those areas and also from discoveries in 
decidable fragments of $\LTLt$ realizability.
The remainder of the paper is structured as follows.
Sec.~\ref{sec:prelim} contains preliminary definitions, including the
Boolean abstraction method from \cite{rodriguez23boolean} and a
running example that is used in the rest of the paper.
Sec.~\ref{sec:static} formalizes the
controller architecture and proves its correctness.
Sec.~\ref{sec:adapt} introduces an adaptive extension of our approach.
%
Sec.~\ref{sec:empirical} contains an empirical evaluation. 
Sec.~\ref{sec:conclusion} shows related work and concludes.


\section{Preliminaries} \label{sec:prelim}

\subsubsection{First-order Theories.}
In this paper we use first-order theories.
We describe theories with single domain for simplicity, but this can
be easily extended to multiple sorts.
A first-order theory $\calT$ (see e.g.,~\cite{bradley07calculus}) is
described by a signature $\Sigma$, which consists of a finite set of
functions and constants, a set of variables and a domain.
The domain $\dom$ of a theory $\calT$ is the sort of its variables.
For example, the domain of non-linear real arithmetic $\ThR$ is
$\mathbb{R}$ and we denote this by $\dom(\ThR)=\mathbb{R}$ or simply
by $\dom$ if it is clear from the context.
A formula $\varphi$ is valid in $\calT$ if, for every interpretation
$I$ of $\calT$, then $I \vDash \varphi$.
A fragment of a theory $\calT$ is a syntactically-restricted subset of
formulae of $\calT$.
Given a formula $\psi$, we use $\psi[\xs \leftarrow \Us]$ for the
substitution of variables $\xs$ by terms $\Us$ (typically constants).

\subsubsection{Reactive Synthesis.}
We fix a finite set of atomic propositions $\AP$.
Then, $\Sigma=2^\AP$ is the alphabet of valuations, and $\Sigma^*$ and
$\Sigma^\omega$ are the set of finite and infinite traces
respectively.
Given a trace $\sigma$ we use $\sigma(i)$ for the letter at position
$i$ and $\sigma^i$ for the suffix trace that starts at position $i$.
The syntax of propositional
\LTL~\cite{pnueli77temporal,manna95temporal} is:
\[
  \varphi  ::= \top \;\DefOR\; a \;\DefOR\; \varphi \lor \varphi \;\DefOR\; \neg \varphi
  \;\DefOR\; \Next \varphi \;\DefOR\; \varphi \U\varphi
\]
where $a\in \AP$; $\lor$, $\land$ and $\neg$ are the usual Boolean
disjunction, conjunction and negation; and $\Next$ and $\U$ are the
next and until temporal operators.
The semantics of \LTL associates traces $\sigma\in\Sigma^\omega$ with
\LTL fomulae as follows:
\[
  \begin{array}{l@{\hspace{0.3em}}c@{\hspace{0.3em}}l}
    \sigma \models \top && \text{always holds} \\
    \sigma \models a & \text{iff } & a \in\sigma(0) \\
    \sigma \models \varphi_1 \Or \varphi_2 & \text{iff } & \sigma\models \varphi_1 \text{ or } \sigma\models \varphi_2 \\
     \sigma \models \neg \varphi & \text{iff } & \sigma \not\models\varphi \\
     \sigma \models \Next \varphi & \text{iff } & \sigma^1\models \varphi \\
     \sigma \models \varphi_1 \U \varphi_2 & \text{iff } & \text{for some } i\geq 0\;\; \sigma^i\models \varphi_2, \text{ and } \text{for all } 0\leq j<i, \sigma^j\models\varphi_1 \\
  \end{array}
\]
We use common derived operators like $\vee$, $\calR$, $\Event$ and
$\Always$.
%
%
%
Reactive synthesis
\cite{pnueli89onthesythesis,pnueli89onthesythesis:b,bloem12synthesis,finkbeiner16synthesis}
is the problem of automatically constructing a system based on an \LTL
specification $\varphi$, where the atomic propositions of $\varphi$
(\AP) are divided into propositions $\Es=\VarsE(\varphi)$ controlled
by the environment and $\Ss=\VarsS(\varphi)$ controlled by the system
(with $\Es\cup\Ss=\AP$ and $\Es\cap\Ss=\emptyset$).
A reactive specification corresponds to a turn-based game where the
environment and system players alternate.
In each turn, the environment produces values for $\Es$, and the
system responds with values for $\Ss$.
A valuation is a map from $\Es$ into $\Bool$ (similarly for $\Ss$).
We use $\val(\Es)$ and $\val(\Ss)$ for valuations.
A play is an infinite sequence of turns 
and induces
a trace $\sigma$ by joining at each position the valuations that the
environment and system players choose.
The system player wins a play if the trace satisfies $\varphi$.
A strategy for the system is a tuple $\rho: \tupleof{Q,q_0,\delta,o}$
where $Q$ is a finite set of states, $q_0\in Q$ is the inital state,
$\delta:Q\times \val(\Es) \Into Q$ is the transition function and
$o:Q\times\val(\Es)\Into\val(\Ss)$ is the output function.
A play $\seqof{(\Es_0,\Ss_0),(\Es_1,\Ss_1),\ldots}$ is played according to $\rho$
if the sequence $\seqof{(\Es_0,\Ss_0,q_0),(\Es_1,\Ss_1,q_1),\ldots}$ satisfies
that $q_{i+1}=\delta(q_i,\Es_i)$ and $\Ss_i=o(q_i,\Es_i)$ for all $i
\geq 0$.
A strategy $\rho$ is wining for the system if all plays played
according to $\rho$ satisfy $\varphi$.
We will use \emph{strategy} and \emph{controller} interchangeably.

\subsubsection{Linear Temporal Logic Modulo Theories. }

The syntax of \LTLt replaces atoms $a$ by literals $l$ from theory
$\calT$.
We use $\Vars(l)$ for the variables in literal $l$ and
$\Vars(\varphi)$ for the union of the variables that occur in the
literals of $\varphi$.
A valuation for a set of vars $\zs$ is a map from $\zs$ into $\dom$.
The alphabet of a formula $\varphi$ is
$\Sigma_{\calT}: \Vars(\varphi) \Into \dom$.
The semantics of \LTLt associate traces
$\sigma\in\Sigma_{\calT}^\omega$ with formulae, where for atomic
propositions $\sigma \models l$ holds iff
$\sigma(0) \vDash_{\calT} l$, that is, if the valuation $\sigma(0)$
makes the literal $l$ true.
The rest of the operators are as in \LTL.

For realizability and synthesis from \LTLt, the variables in
$\Vars(\varphi)$ are split into those variables controlled by the
environment ($\xs$ or $\VarsE(\varphi)$) and those controlled by the
system ($\ys$ or $\VarsE(\varphi)$).
We use $\varphi(\xs,\ys)$ to denote that $\xs\cup\ys$ are the
variables occurring in $\varphi$ (where $\xs \cup \ys=\Vars(\varphi)$
and $\xs \cap \ys = \emptyset$).
%
%
A trace is an infinite sequence of valuations of $\xs$ and $\ys$,
which induces an infinite sequence of Boolean values for each of the
literals at each position, and ultimately a valuation of $\varphi$.
For instance, given $\psi = \Always(y>x)$ the trace
$\seqof{\record{x:4,y:5},\record{x:9,y:7},\ldots}$ induces
$\seqof{\record{l:\True},\record{l:\False},\ldots}$ for the literal $l
= (y>x)$.
An \LTLt specification corresponds to a game with an infinite arena,
where positions can have infinitely many successors.
A strategy now for the system is a tuple
$\rhoT: \tupleof{Q,q_0,\delta,o}$ where $Q$ and $q_0$ are as before and
$\delta:Q\times \val(\xs) \Into Q$ is the transition function and
$o:Q\times\val(\xs)\Into\val(\ys)$ is the output function.

\subsubsection{Boolean Abstraction.}
The Boolean abstraction method~\cite{rodriguez23boolean} transforms an
\LTLt specification $\phiT$ into an equi-realizable \LTL specification
$\phiB$.
The resulting \LTL formula can be passed to an off-the-shelf synthesis
engine, which generates a controller for realizable specifications.
The process of Boolean abstraction involves transforming an input
formula $\phiT$, which contains literals $l_i$, into a new
specification $\phiB = \phiT[l_i \leftarrow s_i] \wedge \phiEx$, where
$\Ss=\{s_i|\text{for each } l_i\}$ is a set of fresh atomic propositions
controlled by the system---such that $s_i$ replaces $l_i$---and where
$\phiEx$ is an additional sub-formula that captures the dependencies
between the $\Ss$ variables\footnote{The Boolean abstraction process can
  substitute larger sub-formulae than literals (as long as they do not
  contain temporal operators).}.
The formula $\phiEx$ also includes additional environment variables
$\es$ (controlled by the environment) that encode the power of the
environment to leave the system with the power to choose certain
valuations of the variables $\Ss$.
The formula $\phiEx$ also constraints the environment in such a way
that exactly one of the variables in $\es$ is true.
 
A \emph{choice} $c$ is a valuation of $\Ss$,
$c(s_i)=\True$ means that $s_i$ is in the choice.
We write $s_i\in c$ as a synonym of $c(s_i)=\True$.
The characteristic formula $f_c(\xs,\ys)$ of a choice $c$ is
\(
  f_c=\bigwedge_{s_i\in c} l_i \And \bigwedge_{s_i\notin c}\neg l_i.
\)
Note that we often represent choices as valuation $\Vss$ of the Boolean variables $\Ss$ 
(which map each variable in $\sss$ to $\True$ or $\False$).
We use $\calC$ for the set of choices (that is, the set of sets of $\Ss$).
A \emph{reaction} $r\subset\calC$ is a set of choices, which
characterizes the possible responses of the system as the result of a
move by the environment.
The \emph{characteristic formula} $f_r(\xs)$ of a reaction $r$ is:
\[
  (\bigwedge_{c\in r} \exists\ys. f_c) \And (\bigwedge_{c\notin r} \forall\ys\neg f_c)
\]
A reaction $r$ is valid whenever $\exists \xs.f_r(\xs)$ is valid.

Intuitively, $f_r$ states that for some valuations of the variables
$\xs$ controlled by the environment, the system can respond with
valuations of $\ys$ making the literals in some choice $c\in r$ but
cannot respond with valuations making the literals in choices
$c\notin r$.
The set of valid reactions partitions precisely the moves of the
environment in terms of the reaction power left to the system.
For each valid reaction $r$ there is a fresh environment variable
$e\in\es$.
Hence, the restriction in $\phiEx$ that forces the environment to make
exactly one variable in $\es$ true corresponds to the environment
choosing a reaction $r$ (when the corresponding $e$ is true).

Boolean abstraction~\cite{rodriguez23boolean} uses the set of valid
reactions to produce an equi-realizable $\phiB$ from a formula
$\phiT$, which are in the same temporal fragment.



\begin{example} [Running example] \label{exRunning}
  Let $\phiT(\xs,\ys)$ be the following specification (where
  $\xs=\{x\}$ is controlled by the environment and $\ys=\{y\}$ by the
  system):
  \[ 
    \phiT =\square \big[ \big((x<2) \Into \Next(y>1)\big)  \And
    \big((x \geq  2) \shortrightarrow (y \leq x)\big)\big].
  \]
In theory $\ThZ$ this specification is realizable (consider the
strategy to always play $y=2$).
In this theory, the Boolean abstraction first introduces $s_0$ to
abstract $(x<2)$, $s_1$ to abstract $(y>1)$ and $s_2$ to abstract
$(y \leq x)$.
Then %
\( \phiB = \varphi'' \wedge \Always (\phiLegal \Into \phiExtra) \)
where
$\varphi'' = (s_0 \shortrightarrow \Next s_1) \wedge (\neg s_0
\shortrightarrow s_2)$ is a direct abstraction of $\phiT$.
Finally, $\phiExtra$ captures the depenencies between the abstracted
variables:
\newcommand{\NN}{\phantom{\neg}}
\begin{align*}
  \phiExtra: &
  \begin{pmatrix}
    \begin{array}{lrcl}
      \phantom{\wedge}& \big(e_0 & \Into & \big( [\NN s_0 \wedge s_1 \wedge \neg s_2] \vee [\NN s_0 \wedge \neg s_1 \wedge s_2] \big) \\[0.27em]
      \wedge & \big(e_1 & \Into & \big( [\neg s_0 \wedge s_1 \wedge \neg s_2] \vee [\neg s_0 \wedge \neg s_1 \wedge \NN s_2] \vee [\neg s_0 \wedge \NN s_1 \wedge \neg s_2] \big)
    \end{array}
  \end{pmatrix} 
\end{align*}
and $\phiLegal: (e_0 \vee e_1) \wedge (e_0 \leftrightarrow \neg e_1)$,
where $\es=\{e_0, e_1 \}$ belong to the environment and
represent $(x<2)$ and $(x \geq 2)$, respectively.
Thus, $\phiLegal$ encodes that $e_0$ and $e_1$ characterize
a partition of the (infinite) input valuations of the environment (and
that precisely one of $\es$ are true in every move).
For example, the valuation
$v_e = \record{e_0:\False,e_1:\True}$ of $\es$
corresponds to the choice of the environment where only $e_1$ is true.
%
%
%
Sub-formulae like $(s_0 \wedge s_1 \wedge \neg s_2)$ represent the
\textit{choices} of the system (in this case, $c=\{s_0,s_1\}$), that is, given a decision of the
environment (a valuation of $\es$ that makes exactly one variable $e$ true),
the system can \emph{react} with one of the choices $c$ in the
disjunction implied by $e$.
We denote $c_0 = \{s_0,s_1,s_2\}$, $c_1 = \{s_0,s_1\}$, $c_2 = \{s_0,s_2\}$,
$c_3 = \{s_0\}$, $c_4 = \{s_1,s_2\}$, $c_5 = \{s_1\}$,
$c_6 = \{s_2\}$ and $c_7 = \emptyset$.
Note that e.g., $c_1$ can be represented as $\Vss = \tupleof{s_0 : \True, s_1 : \True, s_2 : \False}$.
\end{example}


\section{Static Reactive Synthesis Modulo Theories}
\label{sec:static}

The Boolean abstraction method~\cite{rodriguez23boolean} reduces an
\LTLt formula \phiT into an equi-realizable \LTL specification \phiB,
but it does not present a synthesis procedure for \phiT.
We solve this problem here by providing a full static synthesis method
for \LTLt.
Our procedure builds a controller for realizable $\phiT$
specifications that handles inputs and outputs from a rich domain
$\dom(\calT)$, using as a building block the synthetized Boolean
controller for $\phiB$ 
and other two sub-components.

\subsection{Formal Architecture} \label{subsec:architecture}

\begin{figure}[b!]
\centering
  \includegraphics[width=\linewidth]{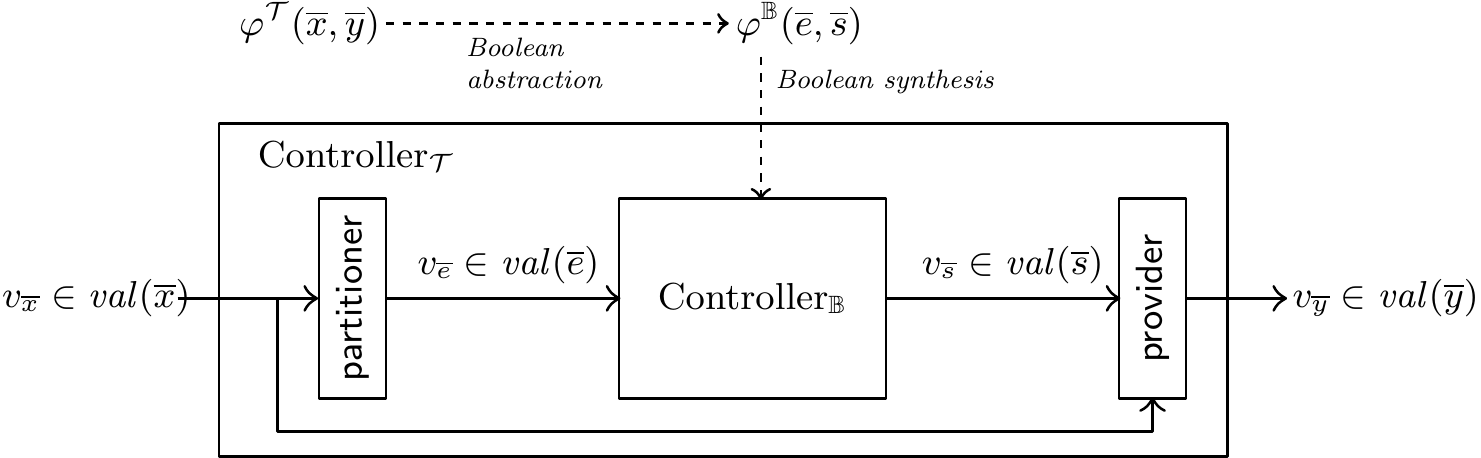}
  \caption{A controller architecture for reactive synthesis of \LTLt specifications.}
  \label{figArchitecture}
\end{figure}

We call our approach static \LTLt synthesis (see
Fig.~\ref{figArchitecture}).
Our method starts from $\phiT(\xs,\ys)$ and statically generates a
Boolean controller $\rhoB$ for $\PhiB$ and combines it with a
\partitioner and a \provider 
(generated from the abstraction process of $\phiT$ to $\phiB$)
handle the inputs and outputs from $\dom$.
%
%
At run-time, at each instant the resulting controller follows these
steps:
\begin{compactenum}[(1)]
\item a valuation $\vxs \in \val(\xs)$ is provided by the environment;
\item the \partitioner discretizes $\vxs$ generating a Boolean
  valuation $\ves\in\val(\Es)$ of input variables for $\rhoB$.
\item $\rhoB$ responds with a valuation $\Vss\in\val(\Ss)$ of the
  variables $\Ss$ that $\rhoB$ controls.
\item the \provider produces a valuation $\vys\in\val(\ys)$ of the output
  variables that together with $\vxs$ guarantee that the literals from
  $\phiT$ will be evaluated as indicated by the choice $c$ that
  corresponds to $\Vss$ indicated by $\rhoB$.
  This step corresponds to finding a model of
  $\exists \ys .f_c([\xs \leftarrow \vxs],\ys)$.
\end{compactenum}  
For step (4) one approach is to invoke an SMT solver on the fly to
generate models (proper values of $\vys$), which is guaranteed to be
satisfiable (by the soundness of the Boolean abstraction method).
However, most uses of controllers cannot use SMT solvers dynamically.
Moreover, note that the formula to be solved has quantifier
alternations (within $f_c$) which is currently challenging for
state-of-the art SMT solving technology for many theories.
In this paper we present an alternative: a method that produces a
totally static controller, using Skolem functions associated to
each $(e, c)$ pair.
These Skolem functions are models of the formula
\[
  \forall \xs. \exists \ys. f_r(\xs) \Into f_c(\xs,\ys)
\]
Recall that $f_r(\xs)$ is the formula that characterizes the
environment valuations for which $r$ captures the possible responses
after receiving $\xs$, according to reasoning in the theory $\calT$.
%



%

\newcommand{\ek}{\ensuremath{\overline{e}_k}\xspace}

\subsubsection{Partitioner.}

At each timestep, the partitioner receives a valuation
$\vxs\in\val(\xs)$ of the environment variables and finds the input
variable $e_k$ to be fed to the Boolean controller.
The partitioner must find the entry $(e,r)$ in the table of valid
reactions for which $f_r(\vxs)$ is valid and return $e$.
For instance, recall partitions $e_0$ and $e_1$ from
Ex.~\ref{exRunning}, then an input trace $\seqof{\record{x:4}, \record{x:4},
\record{x:1}, \record{x:0}, \record{x:2},\ldots}$ will be partitioned into
$\seqof{\record{e:e_1},\record{e:e_1},\record{e:e_0},\record{e:e_0},\record{e:e_1},\ldots}$
(for simplicity, here we show the only Boolean variable $e_i$ that is
true).
The following defines a legal partitioner.

%
\begin{definition}
  \label{def:partitioner} \label{defPartitioner}
  Let $\phiT(\xs,\ys)$ be an \LTLt specification and $\phiB(\Es,\Ss)$
  its Boolean abstraction.
  A \partitioner is a function $\alpha: \val(\xs) \Into \Es$ such
  that if $(e,r)$ is a valid reaction and $f_r[\xs\leftarrow\vxs]$
  is valid, then $\alpha(\vxs)=e$. 
\end{definition}

\noindent{}Note that, by the soundness of the Boolean abstraction method, there
is one
\begin{wrapfigure}[6]{l}{0.50\textwidth} 
  \vspace{-2.5em}
  \begin{minipage}{0.50\textwidth}
    \begin{algorithm}[H]
 \SetAlgorithmName{Alg.}{} 
 \textbf{Input: }$\vxs\in\val(\xs)$ \\
  \ForAll{$(e,r)\in \VR(\varphi)$}
  {
    \If{$f_{r}[\xs\leftarrow\vxs]$ is valid}
    {
     \Return $e$ \\
     }
  }
  \caption{\mbox{A brute force partitioner $\alpha$.}}
  \label{algoPartitioner}
    \end{algorithm}
  \end{minipage}
\end{wrapfigure}
and only one such candidate $e$ for every input $\vxs$.
Note that $\alpha(\vxs)=e$ induces a valuation $\ves$ of the variables
$\Es$ by $\vxs(e)=\True$ and $\vxs(e')=\False$ for all other
$e'\neq e$.
Alg.~\ref{algoPartitioner} shows a brute force method to find variable $e$.

\subsubsection{Controller.}

The Boolean $\<controller>$ receives the discrete environment input
$\ves$ and produces a discrete output $\Vss$ that represents the
selected choices according to a winning strategy for $\phiB$.
This controller $\rhoB$ can obtained using off-the-shelf reactive
synthesis tools.
This controller $\rhoB$ produces a valuation $\Vss\in\val(\Ss)$ at
every step, guaranteeing that the trace produced satisfies $\phiB$.

Consider an instant where only $e$ is true in the input $\ves$,
and let $r$ be the valid reaction corresponding to $e$; then
if $\Vss$ is the output produced by $\rhoB$ from $\ves$ the choice
$c:\{s_i|\Vss(s_i)=\True\}$ belongs to $c\in r$.
This is forced by the $\phiExtra$ constraint in the construction of
$\phiB$ from $\phiT$ in the Boolean abstraction method.
To better illustrate this, recall Ex.~\ref{exRunning} and among the
possible winning strategies that the system has, consider the
following.
If $\ves$ is $\record{e_0:\True,e_1:\False}$, then the output choice
$\Vss$ is $\record{s_0:\True, s_1:\True, s_2:\False}$ (i.e., $c_1$).
On the other hand, if $\ves(e_1)$ is $\record{e_0:\False,e_1:\True}$
then output choice $\Vss$ is
$\record{s_0:\False, s_1:\True, s_2:\True}$ (i.e., $c_4$).
%
%

\subsubsection{Provider.}

The discrete behavior of the Boolean controller requires an
additional component that produces a valuation $\vys\in\val(\ys)$ of
the system variables over $\ys$ satisfying $\phiT$.
The \provider receives the choice and the input $\vxs\in\val(\xs)$, and
substitutes $\vxs$ for $\xs$ in $f_c$:
\( f_c([\xs \leftarrow\vxs],\ys) \).
The goal of the provider is to find a proper valuation for $\ys$.

\begin{definition}[Provider] \label{defProvider} A \provider is a
  function $\beta: \val(\xs)\times \val(\Ss) \Into \val(\ys)$ such
  that for every $\vxs\in\val(\xs)$ and choice $c\in\val(\Ss)$ , the
  following holds
  \[ f_c(\xs\leftarrow\vxs, \ys\leftarrow \beta(\vxs,c)). \]
\end{definition}

\noindent We will show below that if $\vxs$ is an input to a
\partitioner, $r$ is the valid corresponding reaction, and $c$ is one
of the winning choices of the controller (that is, $c\in r$), then the
following formula is valid.
\[
  [\exists \ys. f_c(\ys,\xs\leftarrow\vxs)]
\]  
This formula can be discharged into a solver with capabilities to
produce a model $\vys$ (e.g., an SMT solver like Z3
\cite{demoura08z3}), 
which is exactly the \textbf{dynamic} approach presented at \cite{rodriguez24adaptive}.
\begin{example}
  \label{ex:provider}
  Consider again Ex.~\ref{exRunning} and input trace
  $\seqof{\record{x:4}, \record{x:4}, \record{x:1}, \record{x:0},
  \record{x:2},\ldots}$.
  This trace is mapped into the following discrete input trace
  $\seqof{\record{c:c_4},\record{c:c_4},\record{c:c_1},\record{c:c_1},\record{c:c_4},\ldots}$.
  Recall that
  $s_0$ abstracts $(x<2)$, $s_1$ abstracts $(y>1)$ and $s_2$ abstracts
  $(y \leq x)$.
  Then, the output trace must be a sequence $\vys$ of values of $y$
  such that the following holds:
\[ \begin{array}{r@{}ll}
     (       & [\neg (4<2) \wedge (y>1) \wedge \phantom{\neg}(y \leq 4)],\\
             & [\neg (4<2) \wedge (y>1) \wedge \phantom{\neg}(y \leq 4)],\\
             & [\phantom{\neg}(1<2) \wedge (y>1) \wedge \neg (y \leq 1)],\\
             & [\phantom{\neg}(0<2) \wedge (y>1) \wedge \neg (y \leq 0)],\\
             & [\neg (2<2) \wedge (y>1) \wedge \phantom{\neg}(y \leq 2)],\ldots)%
   \end{array}
 \]
 One such possible sequence is $\seqof{\record{y:2},\record{y:2},\record{y:2},\record{y:2},\record{y:2},\ldots}$.
%
%
%
%
Note how $\xs$ is replaced in each timestep by concrete input $\vxs$.
However, many different values $\vys$ exist that satisfy the output
trace (e.g.
$\seqof{\record{y:2},\record{y:3},\record{y:3},\record{y:4},\record{y:2},\ldots}$.)
\end{example}

\subsubsection{Correctness.}
The Boolean system strategy $\rhoB:\tupleof{Q,q_0,\delta,o}$ for
$\phiB$ produces, at every timestep, a valuation of $\Ss$ from a valuation
of $\Es$.
We now define a strategy $\rhoT$ of the system in $\phiT$ and prove
that all moves played according to $\rhoT$ are winning for the system;
i.e., all produced traces satisfy $\phiT$.
Intuitively, $\rhoT$ composes the \partitioner, which
translates inputs to the Boolean controller, collects the move chosen
by the Boolean controller and then uses the \provider to generate an
output.

\begin{definition}[Combined Strategy]
  \label{def:combined}
Given a \partitioner $\alpha$, a controller $\rhoB$ for $\phiB$
and a \provider $\beta$, the strategy
$\rhoT:\tupleof{Q',q_0',\delta',o'}$ for $\phiT$ is:
\begin{compactitem}
\item $Q'=Q$ and $q_0'=q_0$,
\item $\delta'(q,\vxs)=\delta(q,\ves)$ where $\ves=\alpha(\vxs)$,
\item $o'(q,\vxs)=\beta(\vxs,c)$ where $c=o(q,\vxs)$.
\end{compactitem}
\end{definition}
We use $C(\alpha,\rhoB,\beta)$ for the combined strategy of $\alpha$,
$\rhoB$ and $\beta$. Now we are ready to state the main theorem.

\begin{theorem}[Correctness of Synthesis Modulo Theories]
  \label{thm:soundness}
  Let $\phiT$ be a realizable specification, $\phiB$ its Boolean
  abstraction, $\alpha$ a \partitioner and $\beta$ a
  \provider.
  Let $\rhoB$ be a winning strategy for $\phiB$, and let
  $\rhoT=C(\alpha,\rhoB,\beta)$ be the combined strategy.
  Then $\rhoT$ is winning for $\phiT$.
\end{theorem}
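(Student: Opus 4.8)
The plan is to show that an arbitrary play played according to $\rhoT = C(\alpha,\rhoB,\beta)$ induces a trace satisfying $\phiT$, by relating it step-by-step to a companion play of $\rhoB$ that satisfies $\phiB$. First I would fix an arbitrary play $\seqof{(\vxs_0,\vys_0,q_0),(\vxs_1,\vys_1,q_1),\ldots}$ consistent with $\rhoT$, and define, for each position $i$, the Boolean input $\ves_i = \alpha(\vxs_i)$ and the choice $c_i = o(q_i,\vxs_i)$, together with its induced valuation $\Vss_i \in \val(\Ss)$. By Definition~\ref{def:combined}, $q_{i+1} = \delta(q_i,\ves_i)$ and $\Vss_i = o(q_i,\ves_i)$, so the sequence $\seqof{(\ves_0,\Vss_0,q_0),(\ves_1,\Vss_1,q_1),\ldots}$ is a play consistent with $\rhoB$. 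Since $\rhoB$ is winning for $\phiB$, the induced Boolean trace $\sigma^{\Boolbb}$ (joining $\ves_i$ and $\Vss_i$ at each position) satisfies $\phiB$.

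Next I would establish the key bridge lemma: at every position $i$, the valuation $\record{\xs:\vxs_i,\ys:\vys_i}$ of the theory variables makes each literal $l_k$ of $\phiT$ evaluate exactly as the abstract variable $s_k$ does under $\Vss_i$. This is where the \partitioner and \provider specifications come in. By Definition~\ref{defPartitioner}, $\alpha(\vxs_i)=e$ where $e$ is the unique environment variable whose valid reaction $r$ satisfies $f_r[\xs\leftarrow\vxs_i]$; soundness of the Boolean abstraction guarantees this $r$ exists and is unique. The $\phiExtra$ constraint inside $\phiB$ then forces $c_i \in r$ (since $\rhoB$ produces a $\phiB$-satisfying trace and $\ves_i$ has exactly $e$ true). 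Because $c_i \in r$, the conjunct $\exists\ys. f_{c_i}$ of $f_r$ holds at $\vxs_i$, so $\exists\ys. f_{c_i}(\xs\leftarrow\vxs_i,\ys)$ is valid — this justifies the claim flagged just before Example~\ref{ex:provider}. Now Definition~\ref{defProvider} tells us $\vys_i = \beta(\vxs_i,c_i)$ satisfies $f_{c_i}(\xs\leftarrow\vxs_i,\ys\leftarrow\vys_i)$, i.e.\ $\bigwedge_{s_k\in c_i} l_k \wedge \bigwedge_{s_k\notin c_i}\neg l_k$ holds under $\record{\xs:\vxs_i,\ys:\vys_i}$. Hence literal $l_k$ is true at position $i$ of the theory trace iff $s_k \in c_i$ iff $\Vss_i(s_k)=\True$ iff $s_k$ is true at position $i$ of $\sigma^{\Boolbb}$.

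Finally, I would conclude by a structural induction on the temporal formula (the part of $\phiB$ that is the direct abstraction $\phiT[l_k\leftarrow s_k]$): since the theory trace and the Boolean trace agree, position by position, on the truth of every literal/atom, and the two formulas $\phiT$ and $\phiT[l_k\leftarrow s_k]$ have identical temporal structure, they have the same truth value on the corresponding traces. Because $\sigma^{\Boolbb}\models\phiB$ implies $\sigma^{\Boolbb}\models \phiT[l_k\leftarrow s_k]$ (the $\phiLegal\to\phiExtra$ part is a conjunct), we get that the theory trace satisfies $\phiT$. As the play was arbitrary, $\rhoT$ is winning. The main obstacle is the bridge lemma — specifically, arguing cleanly that the \partitioner picks the correct (unique) reaction $r$ and that the $\phiExtra$ constraint genuinely forces $c_i\in r$ at runtime; this requires carefully unpacking the semantics of $f_r$, $\phiExtra$, and $\phiLegal$ from the Boolean abstraction construction, and making precise the "induced trace" correspondence between valuations of $\xs,\ys$ and valuations of literals. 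The remaining temporal induction is routine.
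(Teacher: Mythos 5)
Your proposal follows essentially the same route as the paper's own proof sketch: construct the companion Boolean play $\seqof{(\ves_i,\Vss_i,q_i)}$ induced by $\alpha$ and $\rhoB$, observe it satisfies $\phiB$ (hence $\phiExtra$), show every literal $l_k$ of $\phiT$ agrees with its abstraction $s_k$ at every position via the \provider, and finish by structural induction on the shared temporal skeleton. Your "bridge lemma" ($\phiExtra$ forces $c_i\in r$, hence $\exists\ys.f_{c_i}(\xs\leftarrow\vxs_i,\ys)$ is satisfiable, so $\beta$ can deliver a witness) correctly and explicitly justifies the step the paper compresses into "it follows that every literal and the corresponding $s_i$ have the same valuation," so the argument is sound and, if anything, more complete than the published sketch.
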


\begin{proof} (Sketch).
  Let $\rhoB:\tupleof{Q,q_0,\delta,o}$ and
  $\rhoT:\tupleof{Q,q_0,\delta',o'}$ be the strategies.
  Let $\pi=\seqof{(\xs_0,\ys_0,q_0),(\xs_1,\ys_1,q_1),\ldots}$ be an infinite
  sequence played according to $\rhoT$, that is $\ys_i=o'(q_i,\xs_i)$
  and $q_{i+1}=\delta'(q_i,\xs_i)$.
  Consider the sequence $\seqof{(\Es_0,\Ss_0,q_0)$, $(\Es_1,\Ss_1,q_1),
    \ldots}$ such that $\Es_i=\alpha(\xs_i)$, $\Ss_i=o(q_i,\Es_i)$ and
  $q_{i+1}=\delta(q_i,\Es_i)$.
  Note that this a play of $\phiB$ played according to $\rhoB$ so it
  satisfies $\phiB$.
  In particular, it satisfies $\phiExtra$.
  Moreover, for every time instant $i$, $\ys_i=\beta(\xs_i,\Ss_i)$ by
  construction.
  It follows that, for every $i$, every literal $l_i$ in $\phiT$ and
  the corresponding $s_i$ in $\phiB$ have the same valuation.
  By structural induction, all corresponding sub-formulae of $\phiB$
  and $\phiT$ have the same valuation at every position.
  Therefore, $\pi\models\phiT$. \qed 
\end{proof}

\subsection{Standalone Synthesis Modulo Theories}

\subsubsection{Static Provider.}
 
As stated above, a \provider produces, at every step, a model of a
(satisfiable) formula (where some of the elements in the formula are
the inputs received at that specific step).
This can be implemented using an SMT solver at every step.
In this paper we propose an alternative approach where we produce at
static time a \provider via the functional synthesis of a Skolem
function.
The controller then invokes the function produced instead of using
dynamic queries to an SMT solver.
Given an arbitrary relation $R(x,y)$ a Skolem function is a function
$\skh$ that witnesses the validity of $\forall x.\exists y.R(x,y)$ by
guaranteeing that $\forall x..R(x,\skh(x))$ is valid.
Recall that a correct \provider is a function
$\beta:\val(\xs)\times\val(\Ss)\Into\val(\ys)$ which is a witness of
the validity of the following formula:
\[
  \forall \xs. \exists \ys. f_r(\xs) \Into f_c(\xs,\ys)
\]
for a given reaction $r$ and choice $c\in r$.
A Skolem function for $c$ is a function $\skh_c:\val(\xs)\Into\val(\ys)$
such that the following is valid:
\[
  \forall \xs. f_r(\xs) \Into f_c(\xs,h_c(\xs))
\]
For instance, consider a specification where the environment controls
an integer variable $x$ and the system controls an integer variable
$y$ in the specification $\PhiT = \Always (y>x)$.
A Skolem function $\skh(x)=x+1$ serves as a witness (providing values for
$y$) of the validity of $\forall x . \exists y . (\top \Into (y>x))$
and can be used to provide correct integer values for $y$.
For many theories, Skolem functions for $\beta$ can be statically
computed, which means that we can generate statically a \provider for
these theories, and in turn, a full static controller for $\phiT$.
In this paper, we used the \texttt{AEval}
funtional synthesis tool ~\cite{fedyukovich19lazySynthesis}, which
generates witnessing Skolem functions for (possibly many) 
existentially-quantified variables; i.e., \texttt{AEval} will output a function for every
existentially quantified variable: e.g.,
$\forall x \exists y,z. (y>x) \wedge (z>y)$ results in $\skh(x)=x+1$ for $y$
and $\skh(x)=x+2$ for $z$.

\begin{example} \label{ex:functions}
  Consider the strategy Ex.~\ref{ex:provider} where the input
  $e_1:\True$ is mapped to $c_4$ and $e_0:\True$ is mapped to $c_1$.
  The first case requires to synthetize the Skolem function for:
\[
  \forall x. \exists y. (x \geq 2) \Into [(x \geq 2) \wedge (y>1) \wedge (y \leq x)])
\]
whereas the second case requires to handle:
\[
  \forall x. \exists y. (x<2) \Into [(x<2) \wedge (y>1) \wedge (y>x)]),
\]
The corresponding invocations to \texttt{AEval} produce the
following:
\[
  \skh_{(e_1,c_4)} = \begin{cases}
    2 & \text{if $(x \geq 2)$}\\
    0 & \text{otherwise}
  \end{cases}
  \quad\text{and}\quad
 \skh_{(e_0,c_1)} =\begin{cases}
    0 & \text{if $(x \geq 2)$}\\
    x+1 & \text{elif $(1<x)$}\\
    2  & \text{otherwise}\\
  \end{cases}
\]
%
%
where we can see that, when the function $\skh_{(e_1,c_4)}$ is called,
(when $e_1$ holds) only the if branch will hold and will always return
$2$.
Similarly, $\skh_{(e_0,c_1)}$ is called when $e_0$ holds, so only the
else branch will hold since in $\dom(\ThZ)$ it never happens
that $(x>1)$ and $(x<2)$ at the same time. Hence, invocation will
always return $2$.
\end{example}


\subsubsection{Predictability.}

Ex.~\ref{ex:provider} shows that there may exist
many different valuations $\vys$ such that $\vys$ matches
the Boolean output trace with values in $\calT$.
The fact that there are many possible outputs that satisfy the same
literals for a given input opens the opportunity to synthesize a
controller for $\phiT$ by adding additional constraints to optimize
certain criteria; e.g., return the greatest value for $\ys$ possible
(we latter study this \textit{adaptivity} in Sec.~\ref{sec:adapt}).

However, since there are many possible outputs that satisfy the same
literals for a given input, using SMT solvers on-the-fly for providing
such outputs does not guarantee that for the same input to the solver,
the same output will be produced.
In practise, different solvers (or even the same solver) can
internally perform different calculations and construct different
models of the same formula even for the same invocation.
This means that the dynamic solver-based approach of
\cite{rodriguez24adaptive} does not guarantee a \provider as a
function (as we present here) but instead it can be non-deterministic:
different invocations with the same input to satisfy the same literals
can produce different outputs.
In other words, the program that implements $\beta$ can be
non-deterministic.
Instead, in our approach with Skolem functions, $\beta$ is a
mathematical function from $\mathbb{B}$ to $\dom(\calT)$ and thus
guarantee \textbf{predictability} in the following sense.

\begin{theorem} [Predictability]
  Let $\phiT$ be a realizable specification, $\phiB$ its Boolean
  abstraction, $\alpha$ a \partitioner and $\beta$ a static \provider.
  Let $\rhoB$ be a winning strategy for $\phiB$, and
  $\rhoT=C(\alpha,\rhoB,\beta)$ be the composed strategy
  Then, given two input traces $\pi_{\xs}$ and $\pi_{\xs}'$ such that 
  $\pi_{\xs}=\pi_{\xs}'$, $\rhoT$ will produce two output traces 
  $\pi_{\ys}$ and $\pi_{\ys}'$ such that $\pi_{\ys}=\pi_{\ys}'$.
\end{theorem}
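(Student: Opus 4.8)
The plan is to prove the statement by a straightforward induction on the length of the prefix of the two traces, tracking the state of the combined strategy $\rhoT$ and showing that at every timestep the internal state, the Boolean input fed to $\rhoB$, the choice returned by $\rhoB$, and the output produced by $\beta$ all coincide on the two runs. Concretely, I would first unfold the definition of $\rhoT = C(\alpha,\rhoB,\beta)$ from Def.~\ref{def:combined}: the state space and initial state are those of $\rhoB$, the transition is $\delta'(q,\vxs) = \delta(q,\alpha(\vxs))$, and the output is $o'(q,\vxs) = \beta(\vxs, o(q,\alpha(\vxs)))$.

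The key observation is that each of the three components along this pipeline is a genuine (deterministic) mathematical function: $\alpha$ is a function $\val(\xs)\Into\Es$ by Def.~\ref{defPartitioner} (and, as noted after that definition, the soundness of Boolean abstraction guarantees that for each $\vxs$ there is one and only one valid candidate $e$, so $\alpha$ is well-defined and single-valued); $\rhoB$, being a strategy $\tupleof{Q,q_0,\delta,o}$, has $\delta$ and $o$ as functions; and $\beta$ is a function $\val(\xs)\times\val(\Ss)\Into\val(\ys)$ by Def.~\ref{defProvider}, which in the static case is realized by the fixed collection of Skolem functions $\skh_c$ and hence is in particular deterministic. Composition of functions is a function, so $\delta'$ and $o'$ are functions.

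With this in hand, the induction goes as follows. Write $\pi_{\xs} = \seqof{\vxs_0,\vxs_1,\ldots}$ and $\pi_{\xs}' = \seqof{\vxs_0',\vxs_1',\ldots}$ with $\vxs_i = \vxs_i'$ for all $i$. Let $\seqof{q_0,q_1,\ldots}$ and $\seqof{q_0',q_1',\ldots}$ be the state sequences, and $\seqof{\vys_0,\vys_1,\ldots}$, $\seqof{\vys_0',\vys_1',\ldots}$ the output sequences, of the two runs. The claim to prove by induction on $i$ is $q_i = q_i'$. Base case: $q_0 = q_0' = q_0$ by Def.~\ref{def:combined}. Inductive step: assuming $q_i = q_i'$, since $\vxs_i = \vxs_i'$ and $\delta'$ is a function we get $q_{i+1} = \delta'(q_i,\vxs_i) = \delta'(q_i',\vxs_i') = q_{i+1}'$. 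Once $q_i = q_i'$ is established for all $i$, a second (non-inductive) application of functionality gives $\vys_i = o'(q_i,\vxs_i) = o'(q_i',\vxs_i') = \vys_i'$ for every $i$, i.e.\ $\pi_{\ys} = \pi_{\ys}'$.

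I do not expect a real obstacle here: the statement is essentially the observation that a deterministic transducer produces equal outputs on equal inputs. The only point that deserves a careful sentence rather than being waved through is \emph{why each stage is deterministic} — in particular that the brute-force partitioner of Alg.~\ref{algoPartitioner} returns a canonical $e$ (guaranteed unique by soundness of Boolean abstraction, so the enumeration order is irrelevant to the result) and that the static \provider, being a finite table of fixed Skolem functions rather than an SMT call, is a function in the mathematical sense. This is exactly the contrast with the dynamic approach of \cite{rodriguez24adaptive} that the surrounding text emphasizes, so I would make that explicit in the proof and then let the induction close the argument in a couple of lines.
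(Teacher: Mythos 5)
Your proof is correct and is essentially the paper's own argument spelled out in full: the paper simply remarks that the theorem ``follows immediately by $\beta$ being a mathematical function,'' and your induction on the state sequence, together with the observations that $\alpha$ is single-valued (by uniqueness of the valid reaction) and that the static \provider is a fixed function rather than a solver call, is exactly the formalization of that one-line justification. No gap; if anything your version is more careful than the paper's.
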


The theorem follows immediately by $\beta$ being a mathematical
function.
In the next section we extend $\beta$ to provide different outputs for
the same input by explicitly adding arguments to this function.




\section{Adaptive Synthesis Modulo Theories}
\label{sec:adapt}

\subsection{Enhancing Controllers}


The static \partitioner presented in the previous section always
generates the same output, for a given choice (valuation of literals)
and input.
However, it is often possible that many different values can be chosen
to satisfy the same choice.
From the point of view of the Boolean controller, any value is
indistinguishable, but from the point of view of the real-world
controller the difference may be significant.
For example, in the theory of linear natural arithmetic
$\Theo = \ThN$, given $x=3$ and the literal $(y>x)$, a Skolem function
$\skh(x)=x+1$ would generate $y=4$, but $y=5$ or $y=6$ are also
admissible.
We call \emph{adaptivity} to the ability of a controller to produce
different values depending on external criteria, while still
guaranteeing the correctness of the controller (in the sense that
values chosen guarantee the specification).
We introduce in this section a \emph{static adaptive} \provider that
exploits this observation.
Recall that the Skolem functions in Sec.~\ref{sec:static} are
synthetised as follows.

\begin{definition}[Basic Provider Formula]
  \label{def:basicF}
  A basic provider formula is a formula of the form
  $\forall \xs. \exists \ys. \psi(\xs,\ys)$, where
  $\psi = f_{r_k(\xs)} \shortrightarrow f_c(\xs,\ys)$ is the
  characteristic formula for reaction $r_k$ and choice $c$.
\end{definition}

We now introduce additional constraints to $\psi$ that---in the case
that the resulting formulae are valid---allow generating functions
that guarantee further properties.
Given a formula $\psi(\xs,\ys)$ and a set of variables $\zs$
(different than $\xs$ and $\ys$) adaptive formulae also enforce an
additional constraint $\psi^+$.
\begin{definition}[Adaptive Provider Formula]
  \label{def:adaptiveF}
  Let $\psi(\xs,\ys)$ be the characteristic formula for a given
  reaction $r_k$ and choice $c$. An \emph{adaptive constraint} is a
  formula $\psi^+(\xs,\zs,\ys)$ whose only free variables are $\xs$,
  $\ys$ and $\zs$.
  An \emph{adaptive provider formula} is of the form
  \[
    \forall \xs, \zs. \exists \ys. [\psi(\xs,\ys) \And \psi^+(\xs,\zs,\ys) ],
  \]
  where $\psi^+$ is an adaptive constraint.
\end{definition}
Note that, in particular, $\psi^+$ can use quantification.
For example, in an arithmetic theory, the additional constraint
$\psi^+:\forall w.\psi(x,w) \Into (|y-z|\leq |w-z|)$ states that all
output alternatives $w$ are farther to $z$ than the $y$ to be
computed.
This formula is constraining the $y$ that must be computed.
The following result guarantees the correctness of using adaptive
provider formulae to craft a provider.
%
%
%
%

\begin{lemma}
  \label{lem:adaptiveSound}
  Let $\forall \xs,\zs . \exists \ys. (\psi \wedge \psi^+)$ be a
  (valid) adaptive provider formula and let $f$ be a Skolem function
  for it.
  Let $\vxs\in\val(\xs)$ and $\vzs\in\val(\zs)$ be arbitrary
  values.
  Then $\psi(\xs\leftarrow\vxs,\ys\leftarrow f(\vxs,\vzs))$ is
  true.
\end{lemma}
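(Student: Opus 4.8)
The plan is to unwind the definitions of Skolem function and adaptive provider formula, and observe that the claim is essentially immediate once the quantifier structure is spelled out. First I would note that, by Definition~\ref{def:adaptiveF}, the adaptive provider formula has the shape $\forall \xs, \zs. \exists \ys. [\psi(\xs,\ys) \And \psi^+(\xs,\zs,\ys)]$, and that $f$ being a Skolem function for it means (by the definition recalled in Sec.~\ref{sec:static}, applied with the pair $(\xs,\zs)$ playing the role of the universally quantified block and $\ys$ the existential one) that the formula $\forall \xs, \zs. [\psi(\xs, f(\xs,\zs)) \And \psi^+(\xs,\zs,f(\xs,\zs))]$ is valid.

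Next I would instantiate this universally valid formula at the concrete values $\vxs$ and $\vzs$: since it holds for every interpretation and every choice of the universally quantified variables, in particular it holds at $\xs \leftarrow \vxs$, $\zs \leftarrow \vzs$, giving that $\psi(\xs\leftarrow\vxs, \ys\leftarrow f(\vxs,\vzs)) \And \psi^+(\xs\leftarrow\vxs,\zs\leftarrow\vzs,\ys\leftarrow f(\vxs,\vzs))$ is true. The desired conclusion $\psi(\xs\leftarrow\vxs,\ys\leftarrow f(\vxs,\vzs))$ then follows by dropping the second conjunct. I would present this as three short steps: (i) rewrite the Skolemized formula, (ii) instantiate the universal quantifiers, (iii) project onto the first conjunct.

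There is no serious obstacle here; the only thing that warrants a sentence of care is making sure the reader sees that it is legitimate to treat the two-variable block $\forall \xs,\zs$ as the "input" side of the Skolemization, so that $f$ genuinely takes both $\vxs$ and $\vzs$ as arguments — this matches the signature $f(\vxs,\vzs)$ used in the statement, and is exactly what \texttt{AEval} produces when several variables are universally quantified (cf.\ the discussion after Definition~\ref{def:adaptiveF}). I would also remark, for orientation, that the point of the lemma is precisely that the extra conjunct $\psi^+$ — however it is chosen, and even if it uses further quantification — never destroys the guarantee $\psi$ that makes $f$ usable as a provider; the adaptive constraint only narrows down which correct output is returned. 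The proof itself is a two-line instantiation argument.
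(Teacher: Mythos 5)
Your proof is correct and is exactly the argument the paper intends: the paper states Lemma~\ref{lem:adaptiveSound} without an explicit proof, treating it as immediate, and your three steps (unfold the Skolem-function definition to get validity of $\forall \xs,\zs.\,[\psi(\xs,f(\xs,\zs)) \And \psi^+(\xs,\zs,f(\xs,\zs))]$, instantiate at $\vxs,\vzs$, drop the second conjunct) are precisely that immediate argument. Nothing is missing.
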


Lemma~\ref{lem:adaptiveSound} shows that synthesizing a Skolem
function for an adaptive formula can be easily transformed into an
Skolem function for the original characteristic formula, so providers
that use adaptive formulae are sound with the original specification.

\begin{example}
  Consider a basic provider formula $\psi:\forall x.\exists y.(y>x)$ in
  $\ThN$ and the Skolem function $\skh(x)=x+1$ generated by
  \texttt{AEval}.
  Consider now the constraint $\psi^+=(y\geq z)\And (y\geq 100)$.
  The adaptive formula
  $\forall x.\forall z.\exists y.(y>x)\And (y\geq z)\And (y\geq 100)$ is valid
  and one possible Skolem function is $\skh(x)=\text{max}(x+1,z,100)$.
  However, if one considers the constraint $\psi_2^+=(y<100)$, the
  resulting provider formula is not valid and there is no Skolem
  function.
  Then, the engineer would have to provide a different constraint or
  use the basic provider formula.
\end{example}  
 
Note that considering different constraints will produce
different Skolem functions without the need of re-synthesizing a
different controller, we only need to switch externally between
functions.

An \emph{adaptive provider description} is a set
$\Gamma=\{\ldots \psi^+_{(r_x,c)}\ldots\}$ of constraint that contains
one constraint per pair $(r_k,c)$---for which $r_k$ is a valid
reaction and $c$ a choice of $r_k$---, such that for every $(r_k,c)$,
the adaptive provider formula
$\forall \xs,\zs. \exists \ys.\psi \And \psi^+_{(r_x,c)}$ is valid.
For example, $\psi^+_{(r_k,c)}=\True$ for every $(r_k,c)$ corresponds
to the basic provider.

\begin{definition} [Adaptive Provider]
  \label{defAdaptProv}
  Let $\Gamma$ be an adaptive provider description.
  An adaptive \provider is a function
  $\beta_\Gamma:\val(\xs)\times\val(\zs)\times(\Ss)\Into\val(\ys)$ such that
  for every $\vxs\in\val(\xs)$, $\vzs\in\val(\zs)$ and a choice
  $c\in\val(\Ss)$ the following holds:
  \[
    f_{(r_k,c)}(\xs\leftarrow \vxs,\ys\leftarrow \beta(\vxs,\vzs,c))
  \]
\end{definition}

Note that given an adaptive provider description, an adaptive provider
always exists, and is given by any Skolem function for each pair
$(r_k,c)$.

\begin{definition}[Combined Adaptive Strategy]
  \label{def:combinedAdaptive}
  Let $\phiT$ be an \LTLt specification and $\Gamma$ be an adaptive
  provider description.
  Given a \partitioner $\alpha$ for $\phiT$, a controller $\rhoB$ for
  $\phiB$ and an adaptive \provider $\beta_\Gamma$, the strategy
  $\rhoT_\Gamma:\tupleof{Q',q_0',\delta',o'}$ for $\phiT$ is:
\begin{compactitem}
\item $Q'=Q$ and $q_0'=q_0$,
\item $\delta'(q,(\vxs \cup \vzs))=\delta(q,\Es)$ where $\Es=\alpha(\vxs)$,
\item $o'(q,(\vxs \cup \vzs))=\beta_\Gamma(\vxs,\vzs,\Ss)$ where $\Ss=o(q,\Es)$.
\end{compactitem}
\end{definition}

\begin{figure}[t!]
\centering
  \includegraphics[width=\linewidth]{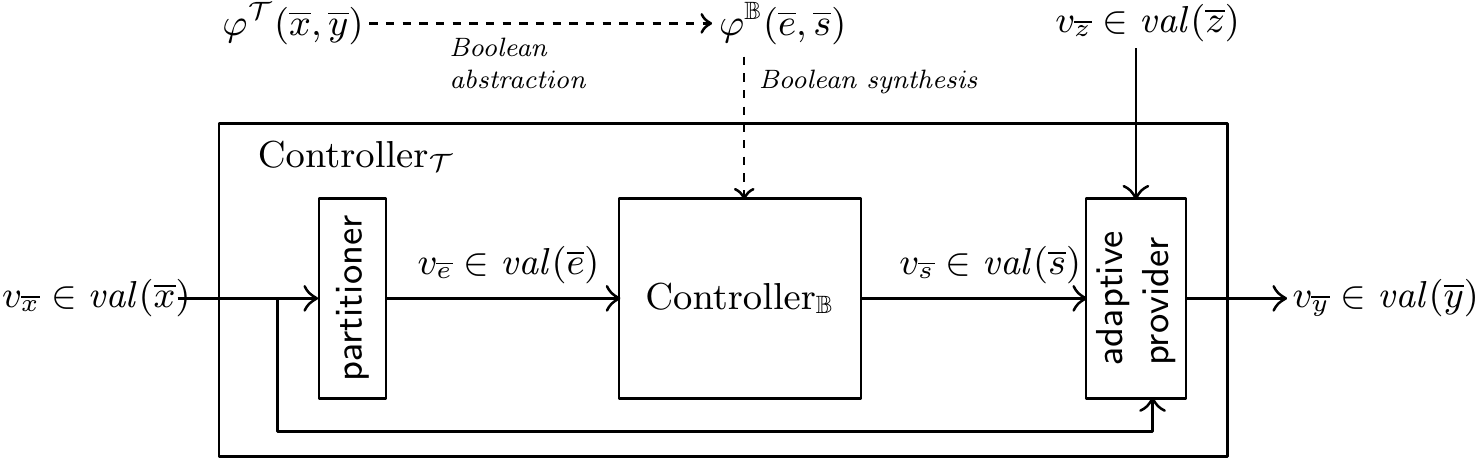}
  \caption{Adaptive architecture, which uses adaptive providers and $\vzs$.}
  \label{figAdaptive}
\end{figure}

Note that in the semantics of $\phiT$ now the environment chooses the
values $\vzs$ of the variables $\zs$ that appear in the constraints in
$\Gamma$. 
Also, note that the overall architecture of adaptive controllers
(see Fig.~\ref{figAdaptive})
is similar to the one presented in Sec.~\ref{sec:static}, 
but using adaptive providers and extra input $\vzs$.

The following holds analogously to Thm.~\ref{thm:soundness} in
Sec.~\ref{sec:static}.

\begin{theorem}[Correctness of Adaptive Synthesis]
  \label{thm:soundnessAdaptive}
  Let $\phiT$ be a realizable specification, $\phiB$ its Boolean
  abstraction, $\alpha$ a \partitioner, $\Gamma$ an adaptive provider
  description and $\beta_\Gamma$ an adaptive \provider.
  Let $\rhoB$ be a winning strategy for $\phiB$, and $\rhoT_\Gamma$
  the strategy obtained as the composition of $\alpha$, $\rhoB$ and
  $\beta_\Gamma$ described in Def.~\ref{def:combinedAdaptive}.
  Then $\rhoT_\Gamma$ is winning for $\phiT$.
\end{theorem}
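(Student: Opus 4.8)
The plan is to mirror the proof of Theorem~\ref{thm:soundness} essentially verbatim, making only the bookkeeping changes needed to account for the extra input $\vzs$ and the adaptive provider $\beta_\Gamma$. First I would fix notation: let $\rhoB:\tupleof{Q,q_0,\delta,o}$ be the winning Boolean strategy and $\rhoT_\Gamma:\tupleof{Q,q_0,\delta',o'}$ be the combined adaptive strategy from Def.~\ref{def:combinedAdaptive}, and consider an arbitrary infinite play $\pi=\seqof{((\xs_0,\zs_0),\ys_0,q_0),((\xs_1,\zs_1),\ys_1,q_1),\ldots}$ played according to $\rhoT_\Gamma$, so that $\ys_i=o'(q_i,(\vxs_i\cup\vzs_i))=\beta_\Gamma(\vxs_i,\vzs_i,\Ss_i)$ and $q_{i+1}=\delta'(q_i,(\vxs_i\cup\vzs_i))=\delta(q_i,\Es_i)$, where $\Es_i=\alpha(\vxs_i)$ and $\Ss_i=o(q_i,\Es_i)$.

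Next I would observe that the projected sequence $\seqof{(\Es_0,\Ss_0,q_0),(\Es_1,\Ss_1,q_1),\ldots}$ is a legitimate play of $\phiB$ played according to $\rhoB$ — this uses only that $\Es_i=\alpha(\vxs_i)$ and the definitions of $\delta,o$ — hence it satisfies $\phiB$, and in particular $\phiExtra$ and $\phiLegal$. From $\phiLegal$, exactly one $e\in\es$ is true at each instant $i$, namely the $e$ with $\alpha(\vxs_i)=e$; by Def.~\ref{defPartitioner} of a \partitioner, the valid reaction $r_{k_i}$ associated with that $e$ satisfies $f_{r_{k_i}}[\xs\leftarrow\vxs_i]$ is valid. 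From $\phiExtra$, the choice $c_i:\{s_j\mid \Ss_i(s_j)=\True\}$ lies in $r_{k_i}$.

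The one genuinely new step — and the point where I expect to spend most effort — is re-deriving the literal-agreement property: for each $i$, every literal $l_j$ in $\phiT$ and its abstraction $s_j$ in $\phiB$ receive the same truth value under $(\vxs_i,\ys_i)$ and $(\Es_i,\Ss_i)$ respectively. In Theorem~\ref{thm:soundness} this came from $\ys_i=\beta(\vxs_i,\Ss_i)$ together with the defining property of a \provider. Here it follows from Lemma~\ref{lem:adaptiveSound}: since $\beta_\Gamma$ is a Skolem function witnessing the valid adaptive provider formula $\forall\xs,\zs.\exists\ys.\psi\wedge\psi^+_{(r_{k_i},c_i)}$, and $f_{r_{k_i}}[\xs\leftarrow\vxs_i]$ is valid, the lemma gives that $\psi[\xs\leftarrow\vxs_i,\ys\leftarrow\beta_\Gamma(\vxs_i,\vzs_i,c_i)]$ holds, i.e. $f_{c_i}(\xs\leftarrow\vxs_i,\ys\leftarrow\ys_i)$ holds; by the definition of the characteristic formula $f_{c_i}$, this says exactly that $l_j$ evaluates to $\True$ under $(\vxs_i,\ys_i)$ iff $s_j\in c_i$ iff $\Ss_i(s_j)=\True$. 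The key subtlety to check is that the $\vzs_i$ component is irrelevant to this argument — it only affects \emph{which} correct $\ys_i$ is produced, not correctness — which is precisely the content of Lemma~\ref{lem:adaptiveSound}.

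Finally I would close by structural induction on subformulae, exactly as in Theorem~\ref{thm:soundness}: since all literals agree at every position, and $\phiT$ and the abstracted part $\phiB$ (prior to conjoining $\phiExtra$) have the same temporal/Boolean structure with $l_j$ replaced by $s_j$, every corresponding subformula has the same valuation at every position, so $\pi\models\phiT$ iff the projected play satisfies the abstracted part of $\phiB$, which it does. Hence $\rhoT_\Gamma$ is winning for $\phiT$. \qed
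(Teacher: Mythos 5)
Your proof is correct and follows essentially the same route as the paper's (much terser) sketch: the paper's argument is precisely that every play of $\rhoT_\Gamma$ satisfies the same literals as the corresponding Boolean play of $\rhoB$ independently of $\zs$ (which you derive from Lemma~\ref{lem:adaptiveSound}), together with the observation that the controller states coincide because $\delta'$ ignores $\vzs$ (which you capture via $q_{i+1}=\delta(q_i,\Es_i)$). Your write-up simply fills in the details the paper delegates to the proof of Theorem~\ref{thm:soundness}.
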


\begin{proof}[Sketch]
  The proof proceeds by showing that any play played according to
  $\rhoT_\Gamma$ satisfies, at all steps, the same literals as
  $\rhoB$, independently of the values of $\zs$.
  It is crucial that, after each step, the controller state that
  $\rhoT_\Gamma$ and $\rhoB$ leave is the same, which holds because
  their $\delta'$ is indistinguishable. \qed
\end{proof}

Skolem functions are computed from basic provider formulae that have a
shape $\forall^*\exists^*. \psi$.
This shape is preserved in adaptive provider formulae in which the
constraint $\psi^+$ is quantifier-free.
However, as the last example illustrated, the constraint $\psi^+$ may
include quantifiers, which does not preserve the shape typically
amenable for Skolemization.

For instance, to compute the smallest $y \in \dom(\ThZ)$ in
$\forall x. \exists y. (y>x)$, one can use the adaptive provider
formula
$\forall x. \exists y. [(y>x) \wedge \forall z. (z>x) \Into
(z\geq{}y)] $.
We overcome this issue by performing quantifier elimination (QE) for
the innermost quantifier and recover the $\forall^*\exists^*$
shape.
In consequence, our resulting method for adaptive provider generation
works on any theory $\calT$ that:
\begin{compactenum}[(1)]
\item is decidable for the $\exists^*\forall^*$ fragment (for the
  Boolean abstraction);
\item permits a Skolem function synthesis procedure (for valid
  $\forall^*\exists^*$ formulae), for producing static providers; and
\item accepts QE (which preserves formula equivalence) for the
  flexibility in defining quentified constraints $\psi^+$.
\end{compactenum}

\begin{example} \label{ex:functionsAdaptive}
  Consider again the strategy of Ex.~\ref{ex:provider} and
  the first Skolem function to synthetise at Ex.~\ref{ex:functionsAdaptive}:
$
  \forall x. \exists y. (x \geq 2) \Into \psi,
$
where $\psi=[(x \geq 2) \wedge (y>1) \wedge (y \leq x)]$.
  Then, we add the adaptivity criteria that we want our strategy to return the greatest value \textbf{possible},
  so the function to synthetise is as follows:
\[
  \forall x. \exists y. (x \geq 2) \Into (\psi \wedge 
  \forall z. [(x \geq 2) \wedge (z>1) \wedge (z \leq x) \Into (z \leq y)])
\]
We show below the results of \texttt{AEval} invocations with the original (left)
and the adaptive (right) versions:
\[
  \skh_{(e_1,c_4)}(x) = \begin{cases}
    2 & \text{if $(x \geq 2)$}\\
    0 & \text{otherwise}
  \end{cases}
  \quad\text{and}\quad
  \skh_{(e_1,c_4)}^+(x) = \begin{cases}
    x & \text{if $(x \geq 2)$}\\
    0 & \text{otherwise}
  \end{cases}
\]
where we can see that, $\skh_{(e_1,c_4)}$ is a more static function in the sense
that it will always return $2$, whereas $\skh_{(e_1,c_4)}^+$ depends on the value
of $x$ in order to return exactly $x$ (which is the greatest value possible).
\end{example}
\section{Empirical Evaluation}
\label{sec:empirical}

We now report on empirical evaluation to asses the performance of our
approach.
We used Python $3.8.8$ for the implementation of the architecture and
Z3 $4.12.2$ for the SMT queries.
We use Strix~\cite{meyerETAL2018strixSynthesisStrikesBack} as the
synthesis engine and \texttt{aigsim.c} to execute the synthetised
controller.
For functional synthesis we used the \texttt{AEval}
solver~\cite{fedyukovich19lazySynthesis}\footnote{Publicly available
  at: \url{https://github.com/grigoryfedyukovich/aeval}} that
leverages Z3.
Currently, \texttt{AEval} expects formulae in linear arithmetic with
the $\forall^*\exists^*. \varphi$ shape, which is suitable for the
static provider we want to synthesise.
We translate the Skolem functions into C\texttt{++} and used
\texttt{g++} $14.0.0$ as a compiler.
We ran all experiments on a MacBook Air $12.4$ with the M1 processor
and $16$ GB. of memory.
%

\subsubsection{Wrap-up experiment.}
We first report our results on $\calT$-controller for
Ex.~\ref{exRunning}.
Following the idea of Ex.~\ref{ex:provider}, we execute 
the input trace  
$\pi = \seqof{\record{x|x \geq 2}, \record{x|x \geq 2}, 
\record{x|x < 2}, \record{x|x < 2}, \record{x|x \geq 2}}$
$100000$ times on (1) a \textbf{dynamic} \provider
following~\cite{rodriguez24adaptive} and (2) our \textbf{static}
\provider approach.
Throughout both experiments, the average time for the \partitioner was
$28$ ms\footnote{Note that the time is dominated by the \partitioner,
  shared in both cases, which consists on searching among a finite
  collection of formulae (valid reactions) to find the correct
  partition and can be easily optimized.} and the average time for the
Boolean controller execution was $2.47$ $\mu$s.
However, the average time for the \textbf{dynamic} \provider was $169$
$\mu$s, whereas the \textbf{static} \provider was about $50$ times
faster: $2.9$ $\mu$s.
%
%
We show in Fig.~\ref{fig:wrapProvTimes} the time needed (in $\mu$s) of
the \textbf{dynamic} \provider and the \textbf{static} \provider in
the first $50$ events.
We can see that (1) the times required in the \textbf{dynamic}
approach are more unstable and that (2) the \textbf{dynamic} approach
is two orders of magnitude faster.
Fig.~\ref{fig:provDynamic} and Fig.~\ref{fig:provStatic} zoom over
Fig.~\ref{fig:wrapProvTimes}.

\begin{figure}[h!]
\minipage{0.33\textwidth}
  \includegraphics[width=1.1\linewidth]{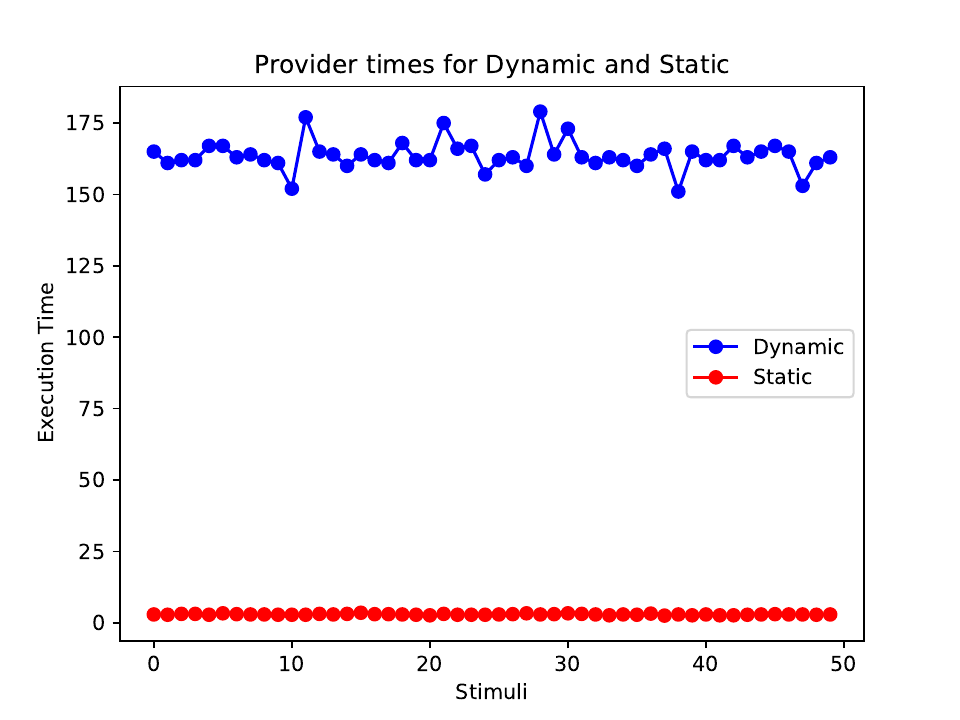}
  \caption{Comparison.}
  \label{fig:wrapProvTimes}
\endminipage\hspace{-0.2em}
\minipage{0.33\textwidth}
  \includegraphics[width=1.1\linewidth]{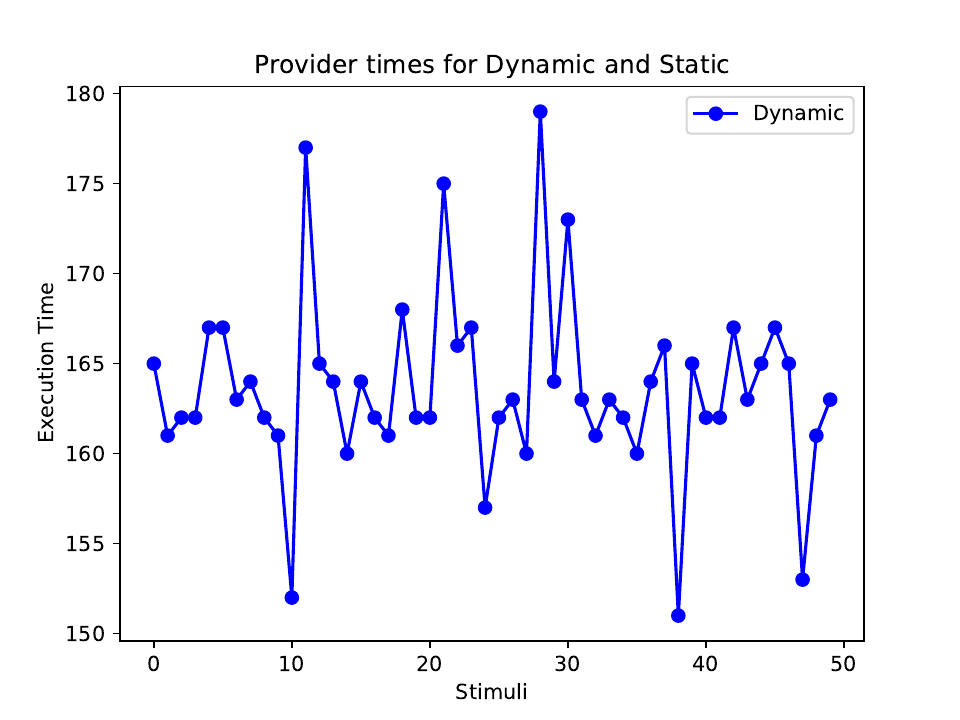}
  \caption{Zoom in Dynamic.}
  \label{fig:provStatic}
\endminipage\hspace{-0.2em}
\minipage{0.33\textwidth}%
  \includegraphics[width=1.1\linewidth]{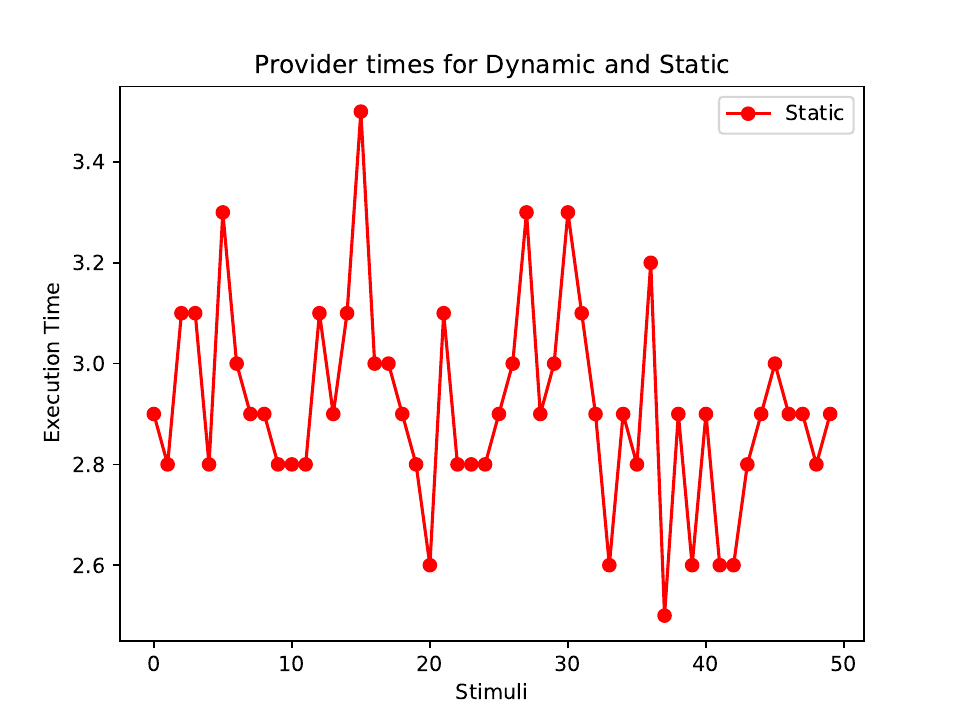}
  \caption{Zoom in Static.}
  \label{fig:provDynamic}
\endminipage
\end{figure}
It is an important detail that, since $\pi$ only provides ranges of
inputs (e.g., a general $x|(x<2)$ instead of a concrete $x:1$), the
input values may be different in both experiments.
Therefore, we executed again the experiments over a same fixed input
trace
$\pi' = \seqof{\record{x:4}, \record{x:4}, \record{x:1}, \record{x:0},
  \record{x:2},\ldots}$ in order to do a sanity check.
Fig.~\ref{fig:provDynamicFixed} shows that the execution with $\pi'$
follows the same tendency as with $\pi$.
Even though the input numbers are repeated, we still encounter
differences in solving times both in Fig.~\ref{fig:provStaticFixed}
and Fig.~\ref{fig:wrapProvTimesFixed}, which suggests that, for such
an amount of constraints to solve, the time to solve is not dominated
by the input, but rather by implementation and memory details.

\begin{figure}[t!]
\minipage{0.33\textwidth}
  \includegraphics[width=1.1\linewidth]{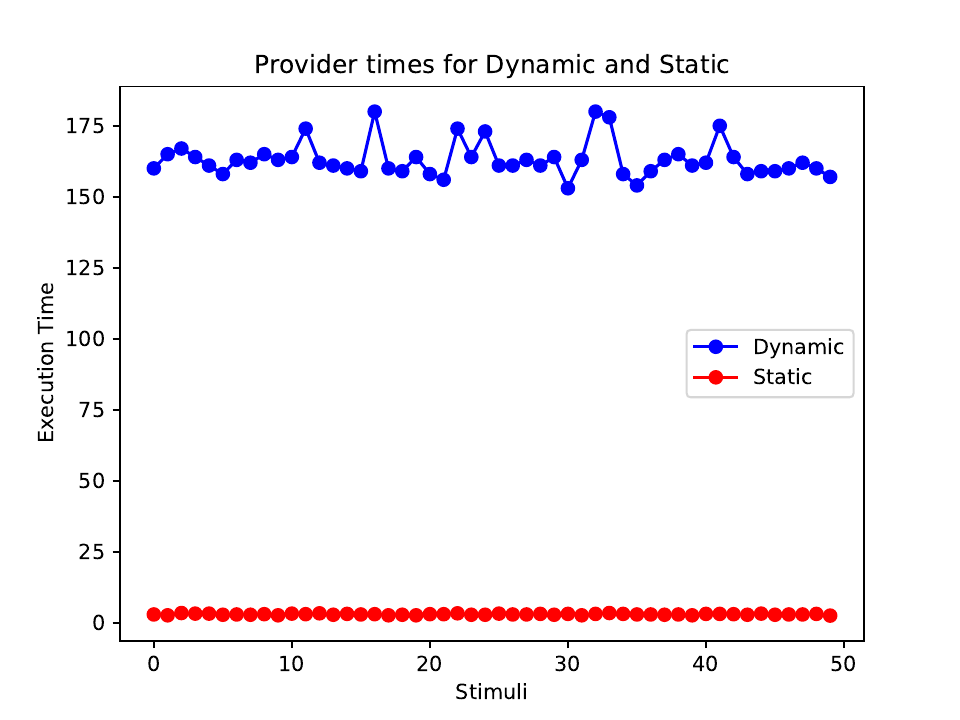}
  \caption{Comparison.}
  \label{fig:provDynamicFixed}
\endminipage\hspace{-0.2em}
\minipage{0.33\textwidth}
  \includegraphics[width=1.1\linewidth]{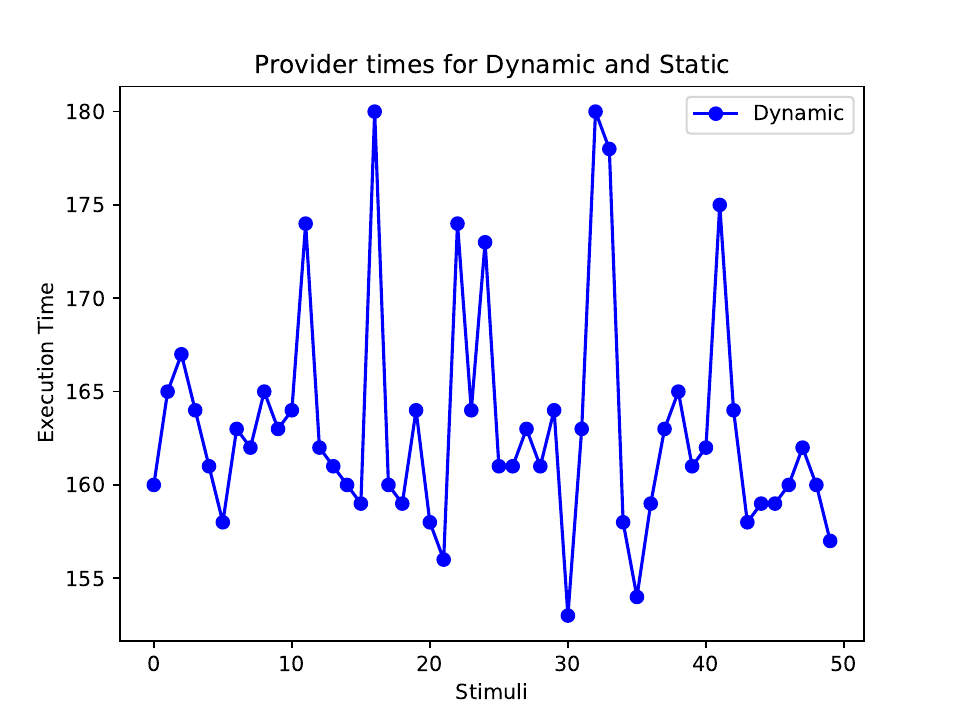}
  \caption{Zoom in Dynamic.}
  \label{fig:provStaticFixed}
\endminipage\hspace{-0.2em}
\minipage{0.33\textwidth}%
  \includegraphics[width=1.1\linewidth]{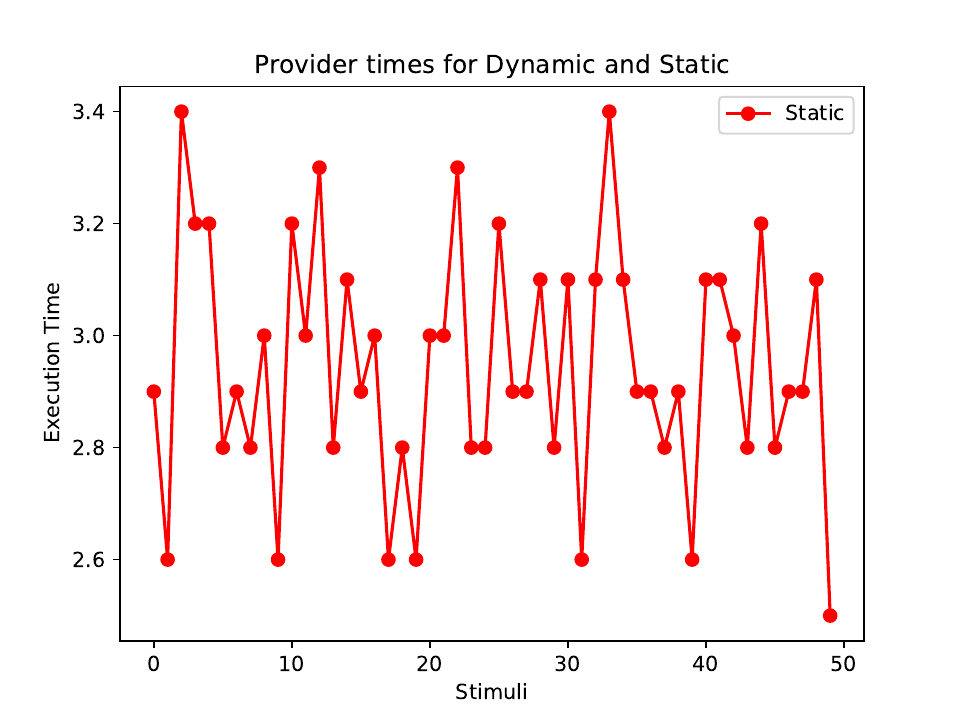}
  \caption{Zoom in Static.}
  \label{fig:wrapProvTimesFixed}
\endminipage
\end{figure}

We also checked the predictability of both approaches using $\pi'$:
i.e., how much does the output differ given the same input and
position in the game.
Recall from Ex.~\ref{ex:provider} and $\pi'$ that at timestep $t:0$
the possible outcomes are $\vys \in \{2,3,4\}$, at $t:1$ again
$\vys \in \{2,3,4\}$, at $t:2$ $\vys \in \{2,3,4,...\}$, at $t:3$
again $\vys \in \{2,3,4,...\}$ and at $t:4$ $\vys \in \{2\}$.
Let us denote with $k$ the repetition of this pattern.
At time $t:0+k$ and $t:1+k$ there are three valid outputs, at $t:2+k$
and $t:3+k$ there are infinitely many valid outputs and at $t:4+k$
there is a single valid output.
Also, note that $\vys$ should be a valid output for every input in $\pi'$
(as captured by both Skolem functions in
Ex.~\ref{ex:functions}).
In Fig.~\ref{fig:predictability}, we show results for $500$ timesteps
($100 \times |\pi'|$).
We can see that the dynamic \provider is less predictable (outputs),
whereas the static \provider always produces the value $2$.
Note that output values in the dynamic \provider are always different
in every experiment, but the general shape remains similar.
Also, note that one might consider that predictability in the dynamic
approach is also remarkably stable, since only for $20$ times out of
the total of $400$ the \provider produces a value different than $2$
($17$ times value $3$, twice the value $4$ and once value $5$).

However, this stability difference increases when adaptivity is
considered.
Concretely, we will use \textit{return the greatest} (illustrated in
Ex.~\ref{ex:functionsAdaptive}) for $t:0+k$, $t:1+k$ and $t:0+4$, and
\textit{return the smallest} for $t:2+k$ and $t:3+k$, which means that
the \textit{ideal} output trace with respect to these criteria is the
pattern
$\seqof{\record{y:4},\record{y:4},\record{y:2},\record{y:2},\record{y:2}}$.
As expected, the static \provider always returns the
ideal pattern, whereas the pattern in the dynamic case is more
unstable.
For example, in Fig.~\ref{fig:predictabilityAdaptive}, we show results
for $50$ stimuli in $\pi'$, where three times the output was not
within the ideal pattern.
We acknowledge that, in the dynamic approach, as the timeout
restrictions for the underlying SMT solver gets more strict (e.g., in
fast embedded contexts), less adaptive constraints will be solved and
thus the output will tend to diverge more from the ideal pattern.

\begin{figure}[!htb]
\minipage{0.33\textwidth}
  \includegraphics[width=1.1\linewidth]{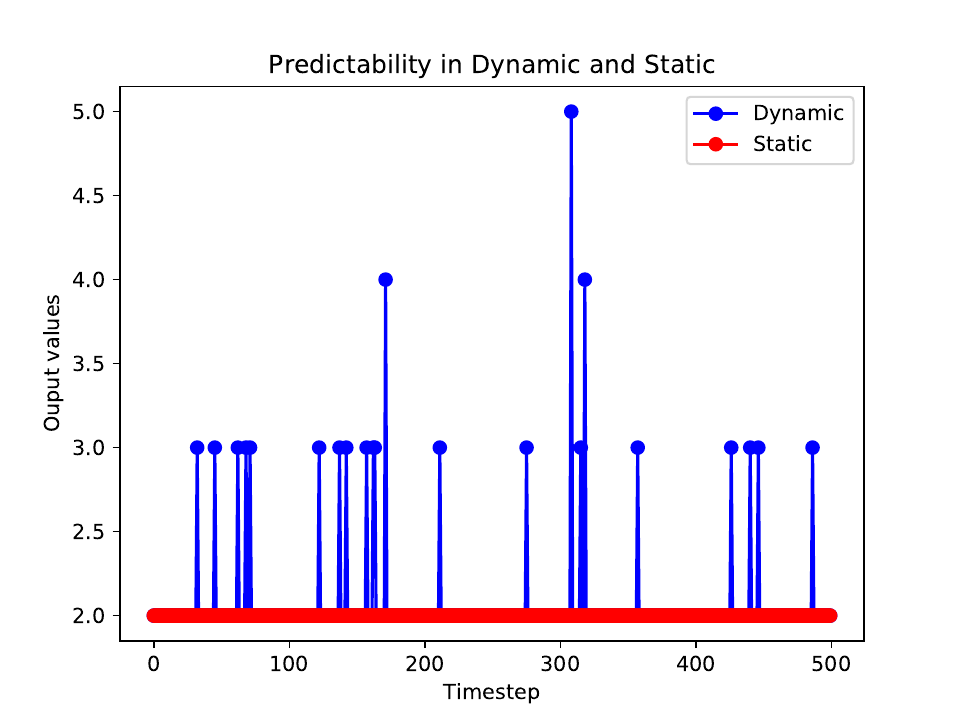}
  \caption{Comparison.}
  \label{fig:predictability}
\endminipage\hspace{-0.1em}
\minipage{0.33\textwidth}
  \includegraphics[width=1.1\linewidth]{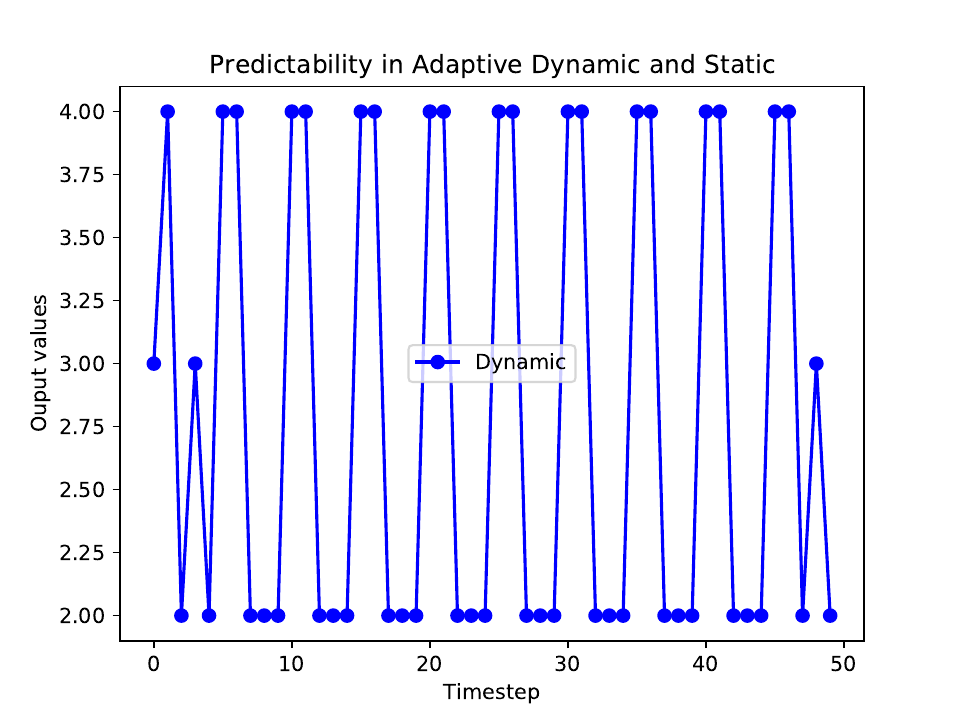}
  \caption{Dyn. Adaptive.}
  \label{fig:predictabilityAdaptive}
\endminipage\hspace{-0.1em}
\minipage{0.33\textwidth}%
  \includegraphics[width=1.1\linewidth]{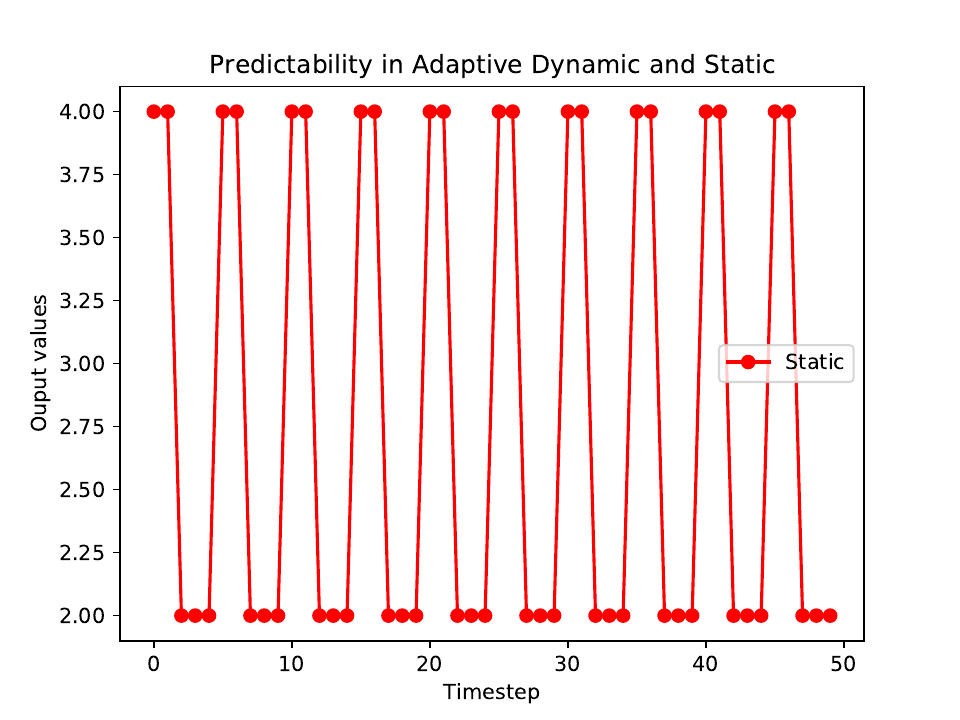}
  \caption{Static Adaptive.}
  \label{fig:predictabilityAdaptiveTimeout}
\endminipage
\end{figure}


\subsubsection{Benchmark results.}


In order to perform a wide empirical evaluation, we used benchmarks
from \cite{rodriguez24adaptive} to validate the hypothesis whether
that our static approach is faster and more predictable.
Tab.~\ref{tabBenchmark} shows the main group of experiments.
\textit{Sz.} refers to the size of the $\phiT$ specification both in
variables (\textit{vr.}) and literals (\textit{lt.}).  \textit{Prep.}
refers to the pre-processing time (i.e., the time needed to compute
the Boolean abstraction and synthetize a Boolean controller).
This and the computation of the \partitioner are performed at compile
time, while in our experiments \provider is computed dynamically for
each new $(\ves,\Vss)$ pair discovered.
Note that the time necessary to compute Skolem functions was
negligible (around $2$ seconds) as expected, since the number of
constraints we used does not stress \texttt{AEval} \footnote{Indeed,
  an eventual input with a larger amount of constraints (i.e.,
  literals) may or may not yield a bottleneck for the underlying
  Boolean abstraction procedure, but not for the approach we present
  in this paper.}.
The following two groups of columns show results of the execution of
$1000$ ($1$K) and $10000$ ($10$K) timesteps of input-output
simulations.
For each group we do not show the average time that the \partitioner
takes to respond with a discrete $\ves$ from $\vxs$ and the average
time that the Boolean controller takes to respond with a Boolean
$c_i$, since it is the same for the \textbf{dynamic} and the
\textbf{static} approaches.
Instead, we report the time that the \provider takes to respond with a
valuation associated to $c_i$ in \textbf{dynamic} (\textit{Dyn.}) and
\textbf{static} (\textit{St.}).
Note that \textit{Tr.} is also the benchmark that takes the most time
on average for $\textit{Pr.}$, since its functions contain more
operations, but it is still efficient enough for the targeted
applications.
Overall, we can see that the static approach is $50$ to $60$ times
faster.

In addition, we used adaptivity with different criteria.
We selected Skolem functions of each benchmark and created adaptive
versions that return: (1) minimum and maximum valuation possible
(\textit{m/m}) knowing there was at least one of such bounds to stop
the search and (2) valuation closest to a $p$ point (\textit{pc.})  that
is randomly generated.
Then, groups ($10$K (\textit{m/m})) and ($10$K (\textit{pc.})) measure
again average times of the dynamic \provider and adaptive \provider.
Note that it is not clear how adaptivity impacts \provider time, since
sometimes the amount of adaptive constraints dominates this time,
whereas in other cases the complexity of the criteria (e.g.,
underlying quantified structure) seems to be more relevant.

\TableBenchmark 
Moreover, we show approximated predictability (\textit{Pre.}) results
of the dynamic approach as a percentage that measures how many of the
outputs diverged from the ideal pattern (e.g., we previously mentioned
that in Fig.~\ref{fig:predictability} this number was $\sim 5\%$: $20$
divergences out of $400$ stimuli).
In Tab.~\ref{tabBenchmark} we can see that predictability is between
$2$ and $15$ percent (average about $5\%$), whereas in the static
varsion it is $0\%$ (not shown in the table).
More importantly, it seems that adding adaptivity tends to result in
less predictable outputs in the dynamic \provider.
These results support our hypothesis that our static controller is
more efficient and predictable.

We also tested whether $\calT$ affected the performance of the
controller, via use cases \textit{Syn. (2,3)} to \textit{Syn. (2,6)}
of \cite{rodriguez23boolean} interpreted over linear integer
arithmetic and linear real arithmetic (the theories accepted by
\texttt{AEval}).
%
\begin{figure}[b!]
  \centering
\begin{tabular}{|c|c c c|c c c|} 
 \hline
 \multirow{2}{*}{Lits} & \multicolumn{3}{|c|}{Linear I. Arithmetic} & \multicolumn{3}{c|}{Linear R. Arithmetic} \\
 & 1K (bs.) & 1K (m/m) & 1K (pc.) & 1K (bs.) & 1K (m/m) & 1K (pc.) \\
 \hline\hline
 $3$ & $3.41$ & $2.32$ & $2.00$ & $3.45$ & $2.14$ & $2.02$\\ 
 \hline
 $4$ & $3.50$ & $2.05$ & $2.07$ & $198$ & $3.60$ & $2.09$ \\
 \hline
  $5$ & $3.63$ & $2.64$ & $2.63$ & $3.83$ & $2.65$ & $2.67$ \\
 \hline
  $6$ & $3.78$ & $3.30$ & $3.38$ & $3.85$ & $3.39$ & $3.45$ \\
 \hline

\end{tabular}
\caption{Comparison of $\ThZ$ and $\ThR$ for \textit{Syn (2,3)} to \textit{Syn (2,6).}}
\label{tab:theoryComparison}
\end{figure}
We show the results in Fig.~\ref{tab:theoryComparison}, where we compare
$1000$ simulations in the basic case (\textit{bs.}), and the adaptive 
minimal/maximal (\textit{m/m.}) and randomly generated point (\textit{pc.}) cases.
Note that
run-time difference is negligible and so was in the abstraction phase.
Also, note that Skolem function synthesis was slightly harder in integers.
%


\section{Related Work and Conclusions}
\label{sec:conclusion}

\subsubsection{Related Work.}

LTL modulo theories has been previously studied
(e.g., \cite{gianola2022ltl,faran2022ltl}), but allowing
temporal operators within predicates, again leading to undecidability.
Also, infinite-state synthesis has been recently studied
at~\cite{cheng2013numerical,azadeh2017strategy,gacek2015towards,samuel23symbolic,azzopardi2023ltl,heim24solving} 
but with similar restrictions. 
At \cite{katis2016synthesis,katis2018validity} authors perform
reactive synthesis based on a fixpoint of $\forall^*\exists^*$
formulae (for which they use \texttt{AEval}), but expressivity is
limited to safety and does not guarantee termination.
The work in \cite{walker2014predicate} also relies on abstraction but
needs guidance and again expressivity is limited.
Reactive synthesis of Temporal Stream Logic (TSL) modulo
theories~\cite{finkbeinerETAL2021temporalStreamLogicModuloTheories} is
studied in~\cite{wonhyuk2022synthesis,maderbacher2021reactive}, which
extends LTL with complex data that can be related accross time.
Again, note that general synthesis is undecidable by relating values across time.
Moreover, TSL is already undecidable for safety, the theory of
equality and Presburger arithmetic.
Thus, all the specifications considered for empirical evaluation in
Sec.~\ref{sec:empirical} are not within the considered decidable fragments.

All approaches above adapt one specific technique and
implement it in a monolithic way, 
whereas \cite{rodriguez23boolean,rodriguez24realizability} generates \LTL specification that existing tools can
process with any of their internal algorithms (bounded
synthesis, for example) so we will automatically benefit from further
optimizations in these techniques.
Moreover, Boolean abstraction preserves the temporal fragments like safety
and GR(1) so specialized solvers can be used.
Throughout the paper, we have already extensively compared the work \cite{rodriguez24adaptive} with ours
and we showed that our approach uses Skolem functions instead of SMT queries on-the-fly, 
which makes it faster, more predictable and a pure controller that can be used in embedded contexts.
It is worth noting that \cite{rodriguez24adaptive} and our approach can be understood 
as computing \textit{minterms} to produce Symbolic automata and transducers
\cite{dAntoni14minimization,dAntoni2017power}
from reactive specifications 
(and using antichain-based optimization, as suggested by \cite{veanes2023symbolic}).
Also, note that any advance in abstraction method (e.g., \cite{azzopardi23ltl}) 
has an immediate positive impact in our work.

Last, \cite{qinheping17automatic} presents a similar idea to our
Skolem function synthesis: instead of solving a quantified formula every time one 
wants to compute an output, they synthesize a term that computes the output from the input.
However, the paper is framed in the program synthesis problem and uses syntax-guided synthesis
\cite{alur13syntax}, whereas previous reactive synthesis papers
have suggested functional synthesis as a recommended software engineering practise
(e.g., \cite{samuel21genSys}).

\subsubsection{Conclusion.}
The main contribution of this paper is the synthesis procedure for
\LTLt, using internally a Boolean controller and static Skolem
function synthesis, which is more performant and predictable than
previous approaches.
Our method also allows producing \emph{adaptive responses} that optimize
the behaviour of the controller with respect to different criteria.
%
%
We showed empirically that our approach is fast for many targeted
applications and analyzed the cost and predictability of our Skolem functions
component compared to \cite{rodriguez24adaptive}.
As far as we know, this is the first decidable full reactive synthesis approach (with
or with adaptivity) for \LTLt specifications.

Future work includes first to use winning regions instead of concrete
controllers to allow even more choices for the Skolem functions, and
to develop a further adaptivity theory.
%
%
Another direction is studying adaptivity over the environment inputs
and combining this approach with monitoring.
Also, we plan to study how to extend \LTLt with transfer
of data accross time preserving decidability, since
recent results~\cite{geatti23decidable} suggest that the expressivity
can be extended with limited transfers in semantic fragments of $\LTLt$.
Moreover, explaining our synthesis approach within more general frameworks
like (e.g., \cite{geatti24general}) is immediate work to do.
Finally, we want to study how to use our approach to construct
more predictable and performant shields \cite{alshiekhETAL2017safeReinforcementLearningShielding,bloem15shield}
(concretely, shields modulo theories \cite{rodriguez24shield,corsi24verification})
to enforce safety in critical systems.



\bibliographystyle{plain}
\bibliography{references}

\vfill
\pagebreak

\appendix

\section{Complete running example}

In $\phiT$ of Ex.~\ref{exRunning} a valid (positional) strategy of the system is to always play $y:2$.
In this appendix we show that a controller synthetised using our technique will,
precisely, respond in this manner infinitely many often.
To do so, we rely on the trace of Ex.~\ref{ex:provider} and Skolem functions of Ex.~\ref{ex:functions}.

First, we Booleanize $\phiT$ using \cite{rodriguez23boolean} and get $\phiB$ 
(also, recall from Ex.~\ref{exRunning} that we use the notation $c_i$ to indicate choice $i$; e.g., 
$c_0 = \{s_0,s_1,s_2\}$, $c_1 = \{s_0,s_1\}$, ...,
$c_6 = \{s_2\}$, $c_7 = \emptyset $.).
Then, we get a controller $C_{\mathbb{B}}$ from $\phiB$.
We note that many strategies satisfy $\phiB$, but $C_{\mathbb{B}}$ 
by Strix is as follows: $C_{\mathbb{B}}(e_1)=c_4$ and $C_{\mathbb{B}}(e_0)=c_1$.
Also, note that this particular strategy is memoryless, but there are
diverse strategies that use memory.
We now show how the static $\calT$-controller computes Skolem functions on demand.

\subsubsection{Step 1: Environment forces instant response.}

Let $x:4$, which holds $(x \geq 2)$ and forces constraint $(y \leq x)$. 
We are in partition $e_1$, which implies choices $\{ c_4, c_5, c_6\}$.
Now, $C_{\mathbb{B}}(e_1)=c_4$, so the $\calT$-controller looks whether the 
pair $(e_1,c_1)$ appeared before. Since it did not, it computes $\skh_{(e_1,c_4)}$ 
(see left-hand function at Ex.~\ref{ex:functions}).
Thus, $\skh_{(e_1,c_4)}(2)=2$ is the output $v_y$ in the first timestep.
Note that a $\calT$-controller with a different underlying $C_{\mathbb{B}}$ could 
also consider $c_6$ in the current play.

\subsubsection{Step 2: Environment repeats the strategy.}

Again, $x:4$ and again we are in partition $e_1$.
Now, $C_{\mathbb{B}}(e_1)=c_4$, so the $\calT$-controller looks whether the 
pair $(e_1,c_4)$ appeared before. Since it does, it just calls pre-computed $\skh_{(e_1,c_4)}$.
Thus, $\skh_{(e_1,c_4)}(2)=2$ is the output $v_y$ in the second timestep.

\subsubsection{Step 3: Environment changes its mind.}

Let $x:1$, which holds $(x < 2)$ and forces constraint $\Next(y > 1)$, whereas no constraint is further for the current timestep. 
We are in partition $e_0$, which implies choices $\{c_1, c_2\}$.
Now, $C_{\mathbb{B}}(e_0)=c_1$, so the $\calT$-controller looks whether the 
pair $(e_0,c_1)$ appeared before. Since it did not, it computes $\skh_{(e_0,c_1)}$ 
(see right-hand function at Ex.~\ref{ex:functions}).
Thus, $\skh_{(e_0,c_1)}(2)=2$ is the output $v_y$ in the third timestep.
Note that a $\calT$-controller with a different underlying $C_{\mathbb{B}}$ could 
also consider $c_2$ in the current play.

\subsubsection{Step 4: Environment prepares its trap.}

Let $x:0$, which holds $(x < 2)$ and forces constraint $\Next(y > 1)$, and take into account that the system has constraint $(y > 1)$ forced by the previous timestep. 
We are again in partition $e_0$, which implies, again, choices $\{c_1, c_2\}$.
Now, $C_{\mathbb{B}}(e_0)=c_1$, so the $\calT$-controller looks whether the 
pair $(e_0,c_1)$ appeared before. Since it does, it just calls pre-computed $\skh_{(e_0,c_1)}$.
Thus, $\skh_{(e_0,c_1)}(2)=2$ is the output $v_y$ in the fourth timestep.
Note that this time there is no correct $C_{\mathbb{B}}$ that could 
also consider $c_2$ in the current play.

\subsubsection{Step 5: Environment strikes back!}

Let $x:2$, which holds $(x \geq 2)$ and forces constraint $(y \leq x)$. 
Also, note that the system has constraint $(y>1)$ from previous timestep. 
We are in partition $e_1$, which implies choices $\{ c_4, c_5, c_6\}$.
Now, the same as in step 2 happens: $C_{\mathbb{B}}(e_1)=c_4$, so the $\calT$-controller looks whether the 
pair $(e_1,c_4)$ appeared before. Since it does, it just calls pre-computed $\skh_{(e_1,c_4)}$.
Thus, $\skh_{(e_1,c_4)}(2)=2$ is the output $v_y$ in the fifth timestep.

Note that this is the dangerous situation, where the system can only output $y:2$;
in other words, it happens again that there is no correct $C_{\mathbb{B}}$ that could 
also consider another choice (in this case, $c_4$ and $c_5$) in the current play.
We can derive all the possible behaviours from these steps.
Also, note that another possibility is to pre-compute all the Skolem functions,
but it is less efficient.

\section{More About Adaptivity}

We outline several immediate consequences of using adaptivity of Sec.~\ref{sec:adapt}. 


\subsubsection{Across-time Adaptivity.}

In the previous section we showed that if we can synthesize Skolem
functions for adaptive provider formulae, then the system obtained is
still a good system for the $\phiT$.
We show now that the additional arguments can be used to produce
better controllers for $\phiT$.
For instance, $\zs$ (thus, $\vzs$ in Fig.~\ref{figAdaptive}) can be used to feed past values to the controller,
and $\psi^+$ can describe desired evolution
of the output in terms of the past history.

\begin{definition} [Across-time adaptive controller]%
  \label{defAcrossAdapt}%
  Let $\phiT(\xs,\ys)$ be a specification, let $\extraCons$ be an
  adaptive provider description, and let $\rhoT_\Gamma$ be the resulting
  controller.
  Then, we say that $\rhoT_\Gamma$ is an across-time adaptive controller if
  the extra variables $\zs$ in $\extraCons$ fed are past values of
  $\xs$ and $\ys$.
\end{definition}

%
%
\begin{example}
  Consider again the characteristic formula $\psi = (y>x)$, and the
  corresponding basic provider formula $\forall x. \exists y. \psi$,
  which is valid in $\ThZ$.
  The constraint $\psi^+ = (y>z)$ makes the adaptive provider formula
  $\varphi = \forall x,z. \exists y. (\psi \wedge \psi^+)$ valid.
  A Skolem function $\skh(x,z)$ guarantess that the output $y$ generated
  is greater than the values of both $x$ and $z$.
  Then, if the controller received $z$ as the value of $y$ in the
  previous timestep (denoted $y^Y$), then we have that the controller
  will generate outputs that are monotonically increasing.

  Note that time adaptivity is not always possible.
  For example, using the constraint $\psi^+_2=(y<z)$ (which would
  force the output to be monotonically decreasing) would turn the
  resulting adaptive provider formula invalid, so a controller cannot
  be produced.
\end{example}  

%
%
The practical implication of Def.~\ref{defAcrossAdapt} is that we can
now produce controllers that in practice satisfy constraints that were
not expressible in $\LTLt$ before, because the transfer of values
across time quickly leads to undecidability of the realizability
problem.
Note that the Booleanization process in Sec.~\ref{sec:static} only
considers $\LTLt$ specifications where literals do not relate
variables from different time instants (also called \emph{non-cross
  state} fragment in \cite{geatti23decidable}).
%
%
Note that across-time adaptivity can be also used feeding to $\zs$ the
result of evaluating functions like \texttt{average},
\texttt{historyMax} on past values of $\xs$ and $\ys$.

%
%



As another example, a \provider that is only required to keep $y$
within limits $[0, 0.3]$ may choose any value within the bounds with a
fixed Skolem function.
However, the designer may prefer to choose ``smooth'' values that do
not change dramatically to avoid abrupt changes in the values of $y$
(see Fig.~\ref{figSmoothSignal}).
Therefore,
\begin{wrapfigure}[10]{l}{0.40 \textwidth}
  \begin{minipage}{0.40 \textwidth}
    \vspace{-4em}
\begin{figure}[H]
  \includegraphics[width=1\linewidth]{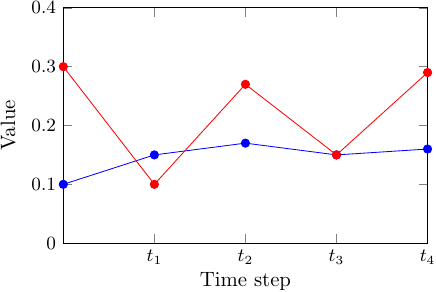}
    \vspace{-2.0 em}
  \caption{Abrupt vs. desired}
  \label{figSmoothSignal}
\end{figure}
\end{minipage}
\end{wrapfigure}
instead of synthetising a basic function $f(x)$ for the
valid formula $\forall x. \exists y. (0 \leq y \leq 3)$, the
controller will produce another $f(x)^{\text{smooth}}$ function for
the valid formula
$\forall x,y^Y. \exists y. (0 \leq y \leq 3) \wedge \neg \exists
w. [((0 \leq z \leq 3)) \wedge (|w-y^Y|<|y-y^Y|)]$, so that produced
value for $y$ is as close as possible to value of $y$ in the previous
timestep.
Since the function exists, then the enriched specification will
produce the desired controller that spans out of $\LTLt$.
Note that the constraint $\psi^+$ in the previous example uses an
additional quantifier $\exists w$.
We show next practical applications and how to cope with this inner
quantifier.

%

\subsubsection{Approximations and Neurosymbolic Control.}
%

Skolem functions computed from basic provider formulae have a
shape $\forall^*\exists^*. \psi$.
This shape is preserved in adaptive provider formulae in which the
constraint $\psi^+$ is quantifier-free.
However, as the last example illustrated, the constraint $\psi^+$ may
include quantifiers, which does not preserve the shape typically
amenable for Skolemization.

For instance, to compute the smallest $y \in \dom(\ThZ)$ in
$\forall x. \exists y. (y>x)$, one can use the adaptive provider
formula
$\forall x. \exists y. [(y>x) \wedge \forall z. (z>x) \Into
(z\geq{}y)] $.
We overcome this issue by performing quantifier elimination (QE) for
the innermost quantifier and recover the $\forall^*\exists^*$
shape.
In consequence, our resulting method works on any theory $\calT$ that:
\begin{compactenum}[(1)]
\item is decidable for the $\exists^*\forall^*$ fragment (for the
  Boolean abstraction);
\item permits a Skolem function synthesis procedure (for valid
  $\forall^*\exists^*$ formulae), for producing static providers; and
\item accepts QE (which preserves formula equivalence) for the
  flexibility in defining quentified constraints $\psi^+$.
\end{compactenum}

The use of quantification opens the door to explore solutions that
exploit characteristics of concrete theories.
Consider for example the theory of Presburger arithmetic $\ThZ$ (which
we illustrate with single variable but can be extended to other
notions of distance with multiple variables, such as Euclidean
distance).
In this theory the following holds.

\begin{lemma}[Closest element]
  \label{lem:Zclosest}
  In $\ThZ$ the following holds.
  Assuming $\forall\xs.\exists y.\psi(\xs,y)$, the following is also
  valid:
  \[
    \forall\xs.z.\exists y.\big(\psi(\xs,y) \And \forall w.[\psi(\xs,w)\Into |y-z|\leq |w-z|]).
  \]
\end{lemma}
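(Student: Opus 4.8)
The plan is to fix an arbitrary valuation $\vxs\in\val(\xs)$ and an arbitrary $z\in\dom(\ThZ)$, and to exhibit a witness $y$ that both satisfies $\psi(\vxs,y)$ and minimizes the distance $|y-z|$ over all witnesses. The key observation is that the set $S_{\vxs}=\{\,w\in\Z \mid \psi(\vxs,w)\,\}$ is nonempty (by the assumed validity of $\forall\xs.\exists y.\psi(\xs,y)$), so the set of distances $D=\{\,|w-z| \mid w\in S_{\vxs}\,\}$ is a nonempty subset of $\mathbb{N}$. By the well-ordering principle, $D$ has a least element $d_0$, and any $w^\star\in S_{\vxs}$ with $|w^\star-z|=d_0$ serves as the desired $y$: it satisfies $\psi(\vxs,w^\star)$ by membership in $S_{\vxs}$, and for every $w$ with $\psi(\vxs,w)$ we have $|w-z|\in D$, hence $d_0=|w^\star-z|\leq|w-z|$, which is exactly the inner universally quantified conjunct. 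Since $\vxs$ and $z$ were arbitrary, the displayed formula is valid.

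The steps, in order, would be: (1) instantiate the assumed formula at $\vxs$ to get some $w$ with $\psi(\vxs,w)$, establishing $S_{\vxs}\neq\emptyset$; (2) push this forward to conclude $D\subseteq\mathbb{N}$ is nonempty; (3) invoke well-ordering of $\mathbb{N}$ to pick the minimum $d_0$ of $D$ and a realizing element $w^\star$; (4) verify that $w^\star$ witnesses the existential: the first conjunct $\psi(\vxs,w^\star)$ is immediate, and the second conjunct $\forall w.[\psi(\vxs,w)\Into |w^\star-z|\leq|w-z|]$ follows from minimality of $d_0$ together with the fact that $|w-z|\in D$ whenever $\psi(\vxs,w)$; (5) discharge the outer universal quantifiers over $\xs$ and $z$ since the argument used nothing specific about them.

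The only subtlety worth flagging is that the argument relies on the integers being well-ordered under absolute-value distance, i.e., that every nonempty set of nonnegative integers has a least element; this is why the statement is specific to $\ThZ$ (Presburger arithmetic) and would fail verbatim over $\ThR$, where an infimum of distances need not be attained. One should also note that the proof is purely semantic — it does not produce a Presburger formula defining $y$ in terms of $\xs,z$ — but that is acceptable here, since the lemma only asserts \emph{validity} of the $\forall^*\exists^*$ formula; the subsequent Skolem-function synthesis (and, where $\psi^+$ carries an inner quantifier, the quantifier-elimination step mentioned above) is what turns this validity into an executable provider. No genuine obstacle is expected; the content is entirely in identifying well-ordering as the right tool and confirming the distance set lands in $\mathbb{N}$.
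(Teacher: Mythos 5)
Your proof is correct. The paper states Lemma~\ref{lem:Zclosest} without a formal proof, offering only the informal gloss that in $\ThZ$ ``there is always a closest value to any provided $z$ that satisfies $\psi$''; your well-ordering argument (the distance set $\{\,|w-z| : \psi(\vxs,w)\,\}$ is a nonempty subset of $\mathbb{N}$ and hence attains a minimum) is precisely the justification the paper tacitly relies on, and your remark that the argument breaks over $\ThR$ is exactly what motivates the paper's subsequent $\epsilon$-approximate variant in Lemma~\ref{lem:Rclosest}.
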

In other words, in $\ThZ$ if for all inputs $\xs$ there is an output
$y$ such that $\psi$ holds, then there is always a closest value to
any provided $z$ that satisfies $\psi$.
The method we propose uses QE to provide a quantifier-free
formula equivalent to
$\psi^+=\forall w. [\psi(\xs,w)\Into |y-z|\leq |w-z|])$.
The formula generated by the elimination depends on each specific
constraint.
Note that this formula only has $\xs$, $z$ and $y$ as free variables.
As a result the Skolem function generation will produce a function
$\skh(\xs,z)$ such that, given values for $\xs$ and $z$, will provide an
appropriate output (for $y$) that satisfies $\psi$ and is minimal (as
expressed by the constraint $\psi^+$).

The following holds in $\ThZ$, which essentially states that if the
candidate $z$ satisfies $\psi$ then $\skh$ will return it, and that $\skh$
always returns a value satisfies $\psi$ and that is as close to $z$ as
possible.

\begin{proposition}
  Let $\psi(\vxs,y)$ be valid and let $\skh$ be a Skolem function of
  \[
    \forall\xs.z.\exists y.\big(\psi(\xs,y) \And \forall w.[\psi(\xs,w)\Into |y-z|\leq |w-z|]).
  \]
  Let $\vxs\in\val(\xs)$ and $v_z\in\val(z)$ be arbitrary value,
  and let $v_y=\skh(\vxs,v_z)$.
  Then,
  \begin{compactitem}
  \item If $\psi(\vxs,v_z)$ then $v_y=v_z$.
  \item Let $o$ be such that $\psi(\vxs,o)$. Then $|v_z-v_y|\leq |v_z-o|$. 
  \end{compactitem}
\end{proposition}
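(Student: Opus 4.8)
The plan is to unpack the definition of the Skolem function $\skh$ and then read off the two bullet points directly from the two conjuncts of the adaptive provider formula. First I would instantiate the universally quantified formula
\[
  \forall\xs, z.\exists y.\big(\psi(\xs,y) \And \forall w.[\psi(\xs,w)\Into |y-z|\leq |w-z|]\big)
\]
at the concrete values $\xs\leftarrow\vxs$ and $z\leftarrow v_z$. By the defining property of a Skolem function (as recalled before Lemma~\ref{lem:adaptiveSound}), the witness it provides is exactly $v_y=\skh(\vxs,v_z)$, so we obtain that $\psi(\vxs,v_y)$ holds and that $\forall w.[\psi(\vxs,w)\Into |v_y-v_z|\leq |w-v_z|]$ holds. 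These are the two facts I will use throughout.

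For the second bullet, let $o$ be any value with $\psi(\vxs,o)$. Instantiating the inner universally quantified $w$ with $o$ in the second conjunct immediately gives $|v_y-v_z|\leq |o-v_z|$, i.e.\ $|v_z-v_y|\leq |v_z-o|$, which is what is claimed. For the first bullet, suppose $\psi(\vxs,v_z)$ holds. Then $v_z$ itself is a legal candidate, so taking $o=v_z$ in the inequality just proved yields $|v_z-v_y|\leq |v_z-v_z|=0$, hence $|v_z-v_y|=0$, hence $v_y=v_z$ in $\ThZ$ (where $|\cdot|$ is the usual absolute value, which is zero only at zero). This establishes both claims.

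The only subtlety — and the one place the argument is not purely formal — is that the adaptive provider formula in the statement must actually be \emph{valid} for the Skolem function to exist and for the instantiation step to be meaningful; this is precisely the content of Lemma~\ref{lem:Zclosest} (``Closest element''), which guarantees that in $\ThZ$, validity of $\forall\xs.\exists y.\psi(\xs,y)$ implies validity of the enriched formula. So I would first invoke Lemma~\ref{lem:Zclosest} to license the existence of $\skh$, and then the rest is the routine instantiation described above. I do not expect any real obstacle here: once $\skh$ is known to exist as a witness of that valid formula, both bullets are one-line instantiations of its two defining conjuncts, and the step $v_y=v_z$ uses only the elementary fact that in the integers $|a|=0$ iff $a=0$.
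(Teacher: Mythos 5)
Your proof is correct. The paper states this proposition without any accompanying proof, so there is nothing to compare against; your argument---invoke Lemma~\ref{lem:Zclosest} to guarantee the adaptive provider formula is valid so that $\skh$ exists, instantiate its defining property at $\vxs$ and $v_z$ to obtain $\psi(\vxs,v_y)$ and the universally quantified minimality conjunct, specialize $w$ to $o$ for the second bullet, and take $o=v_z$ together with $|a|=0$ iff $a=0$ in $\ThZ$ for the first---is the evident intended one and fills the gap the paper leaves.
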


The theory of real arithmetic $\ThR$ is also widely used, but
unfortunately the equivalent of Lemma~\ref{lem:Zclosest} does not hold
because it may not be possible to find a value that satisfies $\psi$
and is the closest to a given candidate.
However, it is always possible to compute a closest within a given
tolerance given by a real constant $\epsilon$.
This is expressed in the following lemma.

\begin{lemma}[Approximately Closest Element]
  \label{lem:Rclosest}
  In $\ThR$ the following holds.
  Assuming $\forall\xs.\exists y.\psi(\xs,y)$, then for every constant
  $\epsilon>0$, the following is valid:
  \[
    \forall\xs.z.\exists y.\big(\psi(\xs,y) \And \forall w.[\psi(\xs,w)\Into |y-z|\leq |w-z|+\epsilon]).
  \]
\end{lemma}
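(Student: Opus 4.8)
The plan is to argue semantically about witnesses for the existential quantifier. Fix an arbitrary model of $\ThR$ together with interpretations $\vxs$ of $\xs$ and $v_z$ of $z$; it suffices to exhibit a value $v_y$ with $\psi(\vxs,v_y)$ such that $|v_y-v_z|\leq|w-v_z|+\epsilon$ for every $w$ with $\psi(\vxs,w)$. Since real arithmetic is complete and admits quantifier elimination, and since for a fixed positive constant $\epsilon$ the displayed formula is a first-order sentence, validity in $\ThR$ coincides with truth over the standard reals $\mathbb{R}$, so I would carry out the argument there. First I would form the set $S := \{\,|w - v_z| : \psi(\vxs,w)\,\}\subseteq\mathbb{R}_{\geq 0}$. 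The standing hypothesis $\forall\xs.\exists y.\psi(\xs,y)$ guarantees there is at least one $w$ with $\psi(\vxs,w)$, so $S$ is nonempty; it is bounded below by $0$; hence by completeness of $\mathbb{R}$ it has an infimum $d:=\inf S\geq 0$.

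The second step uses the $\epsilon$-approximation property of the infimum: because $\epsilon>0$, there is an element of $S$ strictly below $d+\epsilon$, i.e.\ some value $v_y$ with $\psi(\vxs,v_y)$ and $|v_y-v_z|<d+\epsilon$. This $v_y$ is the desired witness: for any $w$ with $\psi(\vxs,w)$ we have $|w-v_z|\in S$, hence $|w-v_z|\geq d$, and therefore $|v_y-v_z|<d+\epsilon\leq|w-v_z|+\epsilon$. Thus $v_y$ satisfies $\psi(\xs,y)\And\forall w.[\psi(\xs,w)\Into|y-z|\leq|w-z|+\epsilon]$ under the chosen interpretation, and since $\vxs$ and $v_z$ were arbitrary, the universal closure holds.

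I expect the only delicate point to be the bridge from this semantic reasoning — which invokes completeness of $\mathbb{R}$, a non-first-order property — to the stated claim of validity in the \emph{theory} $\ThR$. The intended resolution is the completeness of real closed fields noted above; alternatively one can keep the argument internal to the theory by observing that $\{w:\psi(\vxs,w)\}$ is semialgebraic, so its image under the distance map $w\mapsto|w-v_z|$ is a finite union of points and intervals, and such a set always contains points arbitrarily close to its infimum — exactly what the $\epsilon$-step requires. Finally, I would point out the contrast with Lemma~\ref{lem:Zclosest}: over $\ThZ$ the corresponding set $S$ consists of nonnegative integers, so its infimum is attained and no slack is needed; the constant $\epsilon$ in Lemma~\ref{lem:Rclosest} is precisely what compensates for infima over $\mathbb{R}$ that need not be realized, as already happens for $\psi(x,y)\equiv(y>x)$ with $z=x$, where $S=(0,\infty)$ and the closest value does not exist.
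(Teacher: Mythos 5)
Your proof is correct. Note that the paper states Lemma~\ref{lem:Rclosest} without any proof, so there is no official argument to compare against; your infimum-plus-$\epsilon$ argument is the natural way to fill that gap. Concretely: fixing $\vxs$ and $v_z$, the set $S=\{|w-v_z| : \psi(\vxs,w)\}$ is nonempty and bounded below, some element of $S$ lies strictly below $\inf S+\epsilon$, and that element's witness $v_y$ satisfies the required inequality against every competitor $w$ --- all steps check out. You also correctly identify and discharge the only delicate point, namely that the reasoning uses the order-completeness of $\mathbb{R}$ (a non-first-order property) while the claim is about validity in the theory $\ThR$: since the paper's $\ThR$ is (non-linear) real arithmetic, i.e.\ the complete theory of real closed fields, truth in the standard model transfers to validity, and your semialgebraic fallback (the definable set $S$ is a finite union of points and intervals, so it contains points arbitrarily close to its infimum) keeps the argument internal to the theory if one prefers. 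Your closing remark contrasting with Lemma~\ref{lem:Zclosest} via $\psi(x,y)\equiv(y>x)$, $z=x$, $S=(0,\infty)$ matches exactly the motivation the paper gives in prose for why the $\epsilon$ slack is needed over the reals.
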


%
%
Note that there is a practical alternative to using a constant
$\epsilon$ (provided by the user), which consists of iteratively
generating Skolem functions for increasingly better results, by
starting from $\forall x. \exists z.\exists y. \psi(x,y)$ and refining
to better Skolem functions with respect to e.g., closest $y$,
such that, at every iteration, the resulting provider guarantees $\phiT$.
%

Lemmas~\ref{lem:Zclosest} and~\ref{lem:Rclosest} allow computing
adaptive controllers that guarantee $\phiT$ and approximate the value
provided externally to the controller.
A very interesting possibility is to use an external program,
e.g., computed using machine-learning (ML), that provides 
$\vzs$ of Fig.~\ref{figAdaptive}.
We call {neurosymbolic reactive synthesis} to this combination of ML
 and adaptive controller synthesis.
In this manner, we combine a correct-by-construction technique (the
reactive synthesis modulo theories presented here) with a richer
but potentially incorrect program that is trained to optimize sophisticated goals.
The result is a controller that produces safe outputs by
approximating the values proposed by the unsafe ML controller.
This approach resembles shielding
\cite{alshiekhETAL2017safeReinforcementLearningShielding} in the sense
that our approach will also always produce safe outputs.
However, in our approach the value chosen can be different to the
proposed by the ML even if the value proposed is safe (for example, to
also guarantee smoothness).
%
%
A further exploration of applications of neurosymbolic synthesis is out of the scope of this paper.
%

\end{document}